\newcommand{\code}{\mathcal{C}}
\DeclarePairedDelimiter{\ceil}{\lceil}{\rceil}
\DeclarePairedDelimiter\floor{\lfloor}{\rfloor}
\newtheorem{theorem}{Theorem}[section]
\newtheorem{lemma}[theorem]{Lemma}
\newcommand{\wt}{\mbox{wt}}
\newcommand{\dd}{\mbox{d}}
\newcommand{\magma}{{\sc Magma}}
\newcommand{\guava}{{\sc Guava}}
\begin{document}

% =============================================================================
% Title page
% =============================================================================

\title{Fast Algorithms for the Computation 
       of the Minimum Distance of a Random Linear Code}
            
\author{%
Fernando Hernando%$^\dag$ 
\footnote{Depto.~de Matem\'aticas,
          Universidad Jaume I,
          12.071--Castell\'on, Spain.
          \texttt{carrillf@mat.uji.es}.} \and
Francisco D. Igual%
\footnote{Depto.~de Arquitectura de Computadores y Autom\'atica,
          Universidad Complutense de Madrid,
          28040--Madrid, Spain.
          \texttt{figual@ucm.es}.} \and
Gregorio Quintana-Ort\'{\i}%
\footnote{Depto.~de Ingenier\'{\i}a y Ciencia de Computadores,
          Universidad Jaume I,
          12.071--Castell\'on, Spain.
          \texttt{gquintan@icc.uji.es}.}
}
\maketitle

% =============================================================================
% Abstract
% =============================================================================

\begin{abstract}

The minimum distance of a code is an important concept in information theory.
Hence, computing the minimum distance of a code 
with a minimum computational cost 
is a crucial process to many problems in this area.
In this paper, we present and evaluate 
a family of algorithms and implementations
to compute the minimum distance of a random linear code over $\mathbb{F}_{2}$
that are faster than current implementations, 
both commercial and public domain.
In addition to the basic sequential implementations,
we present parallel and vectorized implementations that
render high performances on modern architectures.
The attained performance results show the benefits of the
developed optimized algorithms, which obtain remarkable performance
improvements compared with state-of-the-art implementations widely
used nowadays.

\end{abstract}

% =============================================================================
% Contents starts here
% =============================================================================

% =============================================================================
\section{Introduction}
% =============================================================================

Coding theory is the area that studies codes with the aim of
detecting and correcting errors after sending digital information 
through an unreliable communication channel.
Nowadays, it is widely used in a number of fields, such as 
data compression~\cite{Ancheta,Liveris}, 
cryptography~\cite{Nie,McEl}, 
network coding~\cite{Li}, 
secret sharing~\cite{Shamir,Olav et al},
etc.
The most studied codes are the linear codes, i.e., vector subspaces of
dimension $k$ within a vector space of dimension $n$, and the most used
technique to detect and correct errors is via the Hamming minimum distance.

It is well known that if the Hamming minimum distance of a linear code is
$d$, then $d-1$ errors can be detected, and
$\lfloor (d-1)/2 \rfloor$ errors can be corrected.
Therefore, it is clear that the knowledge of the minimum distance 
of a linear code is essential to determine 
how well such a linear code will perform.
This is an active research area because the discovery of better codes
with larger distances can improve the recovery and fault-tolerance
of transmission lines.

The computation of the minimum distance of a random linear code is 
an NP-hard problem.
Whereas this problem is unsolvable for large dimensions, 
it can be solved in a finite time for small values of $k$ and $n$;
in modern computers, 
feasible values for $n$ are around a few hundreds.

The fastest general algorithm 
for computing the minimum distance of a random linear code
is the so-called Brouwer-Zimmerman algorithm~\cite{Zim},
which is described in~\cite{Grassl}.
This algorithm has been implemented 
in \magma{}~\cite{Magma} over any finite field,
and in GAP (package \guava{})~\cite{GAP,Guava}
over fields $\mathbb{F}_2$ and $\mathbb{F}_3$.

Since larger minimum distances allow to detect and recover from larger errors,
the interest in computing and designing better linear codes 
with larger minimum distances is very high.
The web page in~\cite{cota} stores the best minimum distance 
known to date for every dimension ($k$ and $n$).

In this paper, we present 
a family of new efficient algorithms and implementations 
to compute the minimum distance of random linear codes
over $\mathbb{F}_2$.
Although our implementations only work for $\mathbb{F}_2$,
our ideas can be applied to other finite fields.
The key advantage of the new algorithms is twofold:
they perform fewer row additions than the traditional ones, and
they increase the ratio between row additions and row accesses.
Our new implementations are faster than current ones,
both commercial and public-domain, when using only one CPU core.
Besides, our new implementations 
can also take advantage of modern architectures with multiple cores. 
In this case, our new implementations also render much higher performances
than available implementations on multicore architectures.
Moreover, 
our new implementations can take advantage of 
SIMD (single-instruction, multiple-data)
instructions available in old and modern processors 
to attain even higher performances.
With all these improvements, the time required to compute the minimum
distance of a linear code can be drastically reduced
with respect to current implementations.
We believe that the scientific community can benefit from our work
because the minimum distance of random linear codes can be computed
much faster on serial and multicore architectures
by using our new algorithms and implementations.

The rest of the paper is structured as follows.
Section~\ref{sec:background} introduces the mathematical background
of the problem, and classic approaches typically used to solve it (the
Brower-Zimmerman algorithm), together with the computational challenges
related to it.
Section~\ref{sec:new_algorithms} proposes a family of new algorithms that
tackle the same problem, showing them in an incremental fashion to obtain
optimized algorithmic schemes that reduce their computational complexity
and thus improve performance on modern computing architectures.
Section~\ref{sec:implementation_and_optimization} reports implementation
details to adapt the aforementioned new algorithms to modern parallel
computing architectures.
Section~\ref{sec:performance_analysis} analyzes the performance 
attained by our new algorithms on different modern architectures, 
comparing them with state-of-the-art alternative implementations.
In Section~\ref{sec:new_codes}, we leverage our new algorithms 
to generate some new linear codes.
Section~\ref{sec:codes} briefly describes the programming codes
developed in this paper, their license, and how to get them.
Finally, Section~\ref{sec:conclusions} closes the paper with some
concluding remarks.

% =============================================================================
\section{Background}
\label{sec:background}
% =============================================================================

Let $q=p^r$ be a prime power, 
we denote by $\mathbb{F}_q$ the finite field with $q$ elements. 
By a linear code over $\mathbb{F}_q$, 
say $C$, we mean a $k$-dimensional $\mathbb{F}_q$ vector subspace 
of $\mathbb{F}_q^n$, $n > k$, i.e., 
it is the image of an injective map 
$i:\mathbb{F}_q^k\hookrightarrow \mathbb{F}_q^n$.
As any linear map, $i$ is given by a $k\times n$ matrix $G$ whose rows are
a base of $C$. In coding theory, $G$ is called a generator matrix.
Since it has rank $k$, 
it can be written, after permutation of columns and elementary row  operations,
in the systematic form $G=(I_k\mid A)$, 
where $A$ is a $k\times (n-k)$ matrix,
and $I_k$ is the identity matrix of dimension $k$.

Therefore, if we want to encode a sequence of $k$ bits, 
say $(c_1,\ldots,c_k)$, 
we simply multiply 
$(c_1,\ldots,c_k)(I_k\mid A) = (c_1,\ldots,c_k,c_{k+1},\ldots,c_n)$.
Thus, we introduce $n-k$ redundancy bits, 
which eventually will be useful to correct the information in case of 
corruption.
A linear code over $\mathbb{F}_q$ contains $q^k$ codewords. 
When a codeword 
$(c_1,\ldots,c_n)$
is sent through a noisy channel,
an error might appear in the received word
$r=(c_1,\ldots,c_n)+(e_1,\ldots,e_n)$,
where $e=(e_1,\ldots,e_n)$ is the error vector that occurred.
The method to find out which codeword was sent when $r$ is received
is to replace $r$ with the nearest codeword.
In order to do so, we need a metric.
Given two vectors in $\mathbb{F}_q^n$,
say $a=(a_1,\ldots,a_n)$ and $b=(b_1,\ldots,b_n)$, 
we define the Hamming distance of $a$ and $b$ 
as the number of positions where they differ, i.e.,

\[
  \dd( a, b )=\#\{ i \mid a_i \ne b_i \}.
\]

Hence, the minimum distance of a linear code is defined as follows:
\[
  \dd(C) = \min\{ \dd( a, b ) \mid a, b \in C \}.
\]

It is quite useful to define the weight of a vector
$a=(a_1,\ldots,a_n)\in \mathbb{F}_q^n$  to be the number of non-zero
positions, i.e.,
\[
  \wt( a ) = \#\{ i \mid a_i \ne 0 \} = \dd( a, 0 ).
\]

So, the minimum weight of a linear code is:

\[
  \wt( C ) = \min\{ \wt( a ) \mid a \in C \}.
\]

Since $C$ is a linear subspace, it is easy to prove that $\wt(C)=\dd(C)$, 
but computing the weight requires $q^k$ measurements
whereas computing the minimum distance requires $\binom{q^k}{2}$.
A linear code $C$ over $\mathbb{F}_q$ has parameters $[n,k,d]_q$ 
if it has length $n$, dimension $k$, and minimum distance $d$.

From now onwards, we will use $\dd$ instead of $\dd(C)$ when $C$ is known.
With it, we represent either the minimum distance or the minimum weight.
This number is essential because $\lfloor (d-1)/2 \rfloor$ is
the number of errors that can be corrected using the nearest codeword method.
If the received word is equidistant to two or more codewords,
then we cannot decide which one of them was the sent one.
But as far as $\wt(e)\le \lfloor (d-1)/2 \rfloor$, 
the nearest codeword to the received one is unique.

Therefore, computing the minimum distance of a linear code is an important
task but also a complex one. 
Actually, Vardy~\cite{Vardy} proved that it is an NP-hard problem, 
and the corresponding decision problem is NP-complete.

The fastest general algorithm 
for computing the minimum distance of a linear code (to our knowledge) 
is the Brouwer-Zimmerman algorithm~\cite{Zim}, 
which is explained in detail in~\cite{Grassl}.
It is implemented, with some improvements, in \magma{}~\cite{Magma} 
over any finite field.
It is also implemented in GAP (package \guava{})~\cite{GAP} 
over fields $\mathbb{F}_2$ and $\mathbb{F}_3$.

The method by Brouwer-Zimmerman is outlined in Algorithm~\ref{alg:BZ}.
It is based on the so called information sets.
Given a linear code $C$ with parameters $[n,k,d]$ and a generator matrix $G$,
an information set $S = \{i_1,\ldots,i_k\}\subset \{1,\ldots,n\}$ 
is a subset of $k$ indices such that the corresponding columns of $G$ are 
linearly independent.
Therefore, after permutation of columns and elementary row  operations we
get a systematic matrix $\Gamma_1=(I_k\mid A_1)$.
Assume that we are able to find  $m-1$ disjoint information sets
($S_1\cap\cdots\cap S_{m-1}=\emptyset$), 
then we get $m-1$ different matrices $\Gamma_j=(I_k\mid A_j)$.
Notice that there still may be left $n-k(m-1)$ positions,
so that the corresponding columns of $G$ do not have rank $k$ but $k_m<k$,
then after applying column permutations and row operations,
one gets $\Gamma_m=\left(
\begin{matrix}
I_{k_m}& A\\
0  & B
\end{matrix}\right)$.
In overall, the number of $\Gamma$ matrices is $m$:
The first $m-1$ will have full rank $k$, and the last one will have a
rank strictly smaller than $k$.

The idea is to consider an upper bound $U$, initialized to $n-k+1$, 
and a lower bound $L$, initialized to $1$. 
Then, both bounds are updated after enumerating codewords,
and it is checked whether $L \ge U$; if so, the minimum weight is $U$.

The codewords are enumerated as follows: 
consider all the linear combinations $c\cdot\Gamma_j$ for $j=1,\ldots,m$, 
where $c=(c_1,\ldots,c_k)$ and $\wt(c)=1$
(since we are over $\mathbb{F}_2$, it means that all $c_i$ are zero but one).
After computing any linear combination, 
if the new weight is smaller than $U$, 
then $U$ is updated with the new weight.
Moreover, after processing all those linear combinations
$c\cdot\Gamma_j$ for $j=1,\ldots,m$,
the lower bound is increased in $m-1$ units 
(actually one after each $\Gamma_j$) for the disjoint information sets 
and a different quantity if the information sets are not disjoints
(closed formula). 
See \cite{Grassl} for more details.
Now the same procedure is repeated for linear combinations 
$c\cdot\Gamma_j$ for $j=1,\ldots,m$ and $\wt(c)=2$. 
Then, the same is done for $\wt(c)=3$, 
and so on until $L\ge U$ is obtained.

%%%% \begin{figure}[h]
\begin{algorithm}[ht!]
  \caption{\ensuremath{\mbox{\sc Minimum weight algorithm 
                                 for a linear code $C$}}}
  \label{alg:BZ}
  \begin{algorithmic}[1]
    \REQUIRE The generator matrix $G$ of the linear code $C$ with 
             parameters $[n,k,d]$.
    \ENSURE The minimum weight of $C$, i.e., $d$.
    \medskip
    \STATE $L := 1$; $U := n-k+1$;
    \STATE $g := 1$;
    \WHILE{ $g \le k$ and $L < U$  }
      \FOR{$ j = 1,\ldots, m $}
         \STATE $U := \min\{ U, \min\{\wt( c\Gamma_j ) : 
                 c \in \mathbb{F}_2^k \mid \wt(c)=g\}\}$ ;
      \ENDFOR
      \STATE $L:=(m-1)(g+1)+\max\{0,g+1-k+k_m\}$ ;
      \STATE  $g:=g+1$;
    \ENDWHILE
    \RETURN $U$;
  \end{algorithmic}
\end{algorithm}
%%%% \caption{Minimum Weight Algorithm by Brouwer-Zimmerman.}
%%%% \end{figure}

It is clear from line 7 in Algorithm~\ref{alg:BZ}
that the lower bound increases linearly in $m-1$ units.
So, the more disjoint information sets, 
the more it increases.
Hence, the algorithm will end up earlier with a large $m$ and, in consequence,
with a small $g$.
Notice that the number of linear combinations of weight $\leq g$ is
$N = \sum_{j=1}^g \binom{k}{j}$.
For small values of $g$ we have that $N<\binom{k}{g+1}$, 
so increasing by one the value of $g$
will require a cost as large as all the previous work.
We actually have the following result.

\begin{lemma}
Using the previous notation, if $g\le \frac{k}{3}$, then
\[
  \sum_{j=1}^{g-1} \binom{k}{j} < \binom{k}{g}
\]
\end{lemma}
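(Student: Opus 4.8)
The plan is to bound the sum $\sum_{j=1}^{g-1}\binom{k}{j}$ by comparing each term to the single term $\binom{k}{g}$ and controlling the resulting geometric-type series. The key observation is that for $1 \le j \le g-1$, the ratio of consecutive binomial coefficients satisfies
\[
  \frac{\binom{k}{j-1}}{\binom{k}{j}} = \frac{j}{k-j+1}.
\]
Since $j \le g-1 < g \le k/3$, we have $k-j+1 > k - k/3 = 2k/3 \ge 2g > 2j$, so this ratio is strictly less than $\frac{1}{2}$ for every $j$ in the relevant range. Hence the terms $\binom{k}{1}, \binom{k}{2}, \ldots, \binom{k}{g-1}$, read from the top down, decrease faster than a geometric sequence with ratio $\frac{1}{2}$.

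First I would use this to write, for each $i = 0, 1, \ldots, g-2$,
\[
  \binom{k}{g-1-i} < \left(\tfrac{1}{2}\right)^{i+1}\binom{k}{g},
\]
which follows by iterating the ratio bound $i+1$ times (each step loses more than a factor of $2$). Summing over $i$ from $0$ to $g-2$ gives
\[
  \sum_{j=1}^{g-1}\binom{k}{j} \;<\; \binom{k}{g}\sum_{i=0}^{g-2}\left(\tfrac{1}{2}\right)^{i+1} \;<\; \binom{k}{g}\sum_{i=1}^{\infty}\left(\tfrac{1}{2}\right)^{i} \;=\; \binom{k}{g},
\]
which is exactly the claimed inequality.

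The main thing to be careful about is the edge of the range: I must verify that the bound $k - j + 1 > 2j$ genuinely holds for the \emph{largest} value $j = g-1$, since that is where the ratio is closest to $\frac{1}{2}$ and the argument is tightest. Using $g \le k/3$ one gets $j = g-1 \le k/3 - 1$, so $k - j + 1 \ge k - k/3 + 2 = 2k/3 + 2 > 2(k/3 - 1) \ge 2j$, with room to spare; this strict inequality at every step is what makes the geometric sum strictly less than $1$ rather than merely $\le 1$, so no boundary term is lost. A small sanity check of the base cases $g = 1$ (empty sum, trivially $0 < \binom{k}{1}$) and $g = 2$ also confirms the statement, so the inductive/telescoping estimate is consistent.
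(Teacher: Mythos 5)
Your proof is correct and takes essentially the same route as the paper: both hinge on the fact that for $g \le \tfrac{k}{3}$ each coefficient $\binom{k}{j}$ with $j \le g$ is more than twice the preceding one, which the paper exploits as the telescoping chain $\binom{k}{g} > 2\binom{k}{g-1} > \binom{k}{g-1}+2\binom{k}{g-2} > \cdots > \sum_{j=1}^{g-1}\binom{k}{j}$ and you exploit as a geometric series summing to less than $\binom{k}{g}$. The only nitpick is that your ratio bound must also be invoked at $j = g$ (for the first step $\binom{k}{g-1} < \tfrac{1}{2}\binom{k}{g}$), not merely for $1 \le j \le g-1$ as stated, and it does hold there since $g \le \tfrac{k}{3}$ gives $k-g+1 > 2g$.
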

\begin{proof}
First of all we have that $\binom{k}{g}=\frac{k-g+1}{g}\binom{k}{g-1}$.
Since $g\le \frac{k}{3}$ we have that
$\binom{k}{g}>2\binom{k}{g-1}=\binom{k}{g-1}+\binom{k}{g-1}>\binom{k}{g-1}+2\binom{k}{g-2}>\cdots
>\sum_{j=1}^{g-1} \binom{k}{j}$.
\end{proof}

It is quite clear that the most computationally intensive part of this method
is the computation of all those linear combinations of the $g$ rows
of every matrix $\Gamma_j$,
since 
the cost of generating all linear combinations is combinatorial,
whereas
the cost of the diagonalization to obtain the $\Gamma_j$ matrices is $O(n^3)$.
In the following, we will focus on the efficient computation of 
the linear combinations to speed up this algorithm.

% =============================================================================
\section{New algorithms}
\label{sec:new_algorithms}
% =============================================================================

As said before,
the most time-consuming part of the Brouwer-Zimmerman algorithm
is the generation of linear combinations
of the rows of the $\Gamma$ matrices.
Its basic goal is simple:
For every $\Gamma$ matrix, the additions of all the combinations of
its rows must be computed,
and then the minimum of the weights of those additions
must be computed.
The combinations of the $k$ rows of the $\Gamma$ matrices
are generated
first taking one row at a time,
then taking two rows at a time,
then taking three rows at a time,
etc.
The minimum of those weights must be computed,
and when the minimum value ($L$) is equal to or larger than 
the upper value ($U$),
this iterative process finishes.
Unlike the previous algorithm,
next algorithms do not show the termination condition
and the updatings of $L$ and $U$ in order to simplify the notation.

We have designed several algorithms to perform this task.
The basic goals of the new algorithms to obtain better performances
are the following:
\begin{enumerate}
\item  Reduction of the number of row addition operations.
\item  Reduction of the number of row access operations.
\item  Increase of the ratio between the number of row addition operations and
       the number of row access operations.
\item  Use of cache-friendly data access patterns.
\item  Parallelization of the serial codes to use all the cores
       in the system.
\item  Vectorization of the serial and parallel codes
       to exploit the SIMD/vector hardware machine instructions and units.
\end{enumerate}

In the rest of this section,
we describe in detail the algorithms implemented in this work.
They are the following:
the basic algorithm,
the optimized algorithm,
the stack-based algorithm,
the algorithm with saved additions, and
the algorithm with saved additions and unrollings.
In those algorithms, we have supposed that 
vector and matrix indices start with zero.

% -----------------------------------------------------------------------------
\subsection{Basic algorithm}
% -----------------------------------------------------------------------------

The most basic algorithm is straightforward:
If a $\Gamma$ matrix has $k$ rows,
all the combinations of the $k$ rows taken
with an increasing number of rows are generated.
For every generated combination, the corresponding rows are added,
and the overall minimum weight is updated.
The basic algorithm is outlined in Algorithm~\ref{alg:basic}.

\begin{algorithm}[ht!]
  \caption{\ensuremath{\mbox{\sc Basic algorithm}}}
  \label{alg:basic}
  \begin{algorithmic}[1]
    \REQUIRE The generator matrix $G$ of the linear code $C$ with
             parameters $[n,k,d]$.
    \ENSURE  The minimum weight of $C$, i.e., $d$.
    \medskip
    \FOR{ $g = 1, 2, \ldots$ }
      \FOR{ every $\Gamma$ matrix ($k \times n$) of $G$ }
        \STATE // Process all combinations of the $k$ rows of $\Gamma$ 
                  taken $g$ at a time:
        \STATE ( done, c ) = Get\_first\_combination();
        \WHILE{( ! done )}
          \STATE Process\_combination( c, $\Gamma$ );
          \STATE ( done, c ) = Get\_next\_combination( c );
        \ENDWHILE
      \ENDFOR
    \ENDFOR
  \end{algorithmic}
\end{algorithm}

Although this algorithm could have been implemented recursively,
this type of implementation is usually slow in current machines.
Hence, we used instead an iterative implementation.

Each combination $c$ will contain the indices of the rows of 
the current $\Gamma$ matrix being processed.
The methods \texttt{Get\_first\_combination()} and
\texttt{Get\_next\_combination()}
return in the first output value whether there are more combinations to process,
and compute and return in the second output value the first/next combination
to be processed.
The second method also requires $c$ as an input parameter
to be able to generate the next combination to this one.
The order in which the combinations are generated is not important
in this algorithm.
However, in our implementation the lexicographical order was used
since it is very cache-friendly due to accessing the rows in a serial way.

The method \texttt{Process\_combination}
adds the rows of the current $\Gamma$ matrix with indices in $c$,
and then updates the minimum weight.
The algorithm stops as soon as the overall minimum weight
is equal or larger than the upper value.

Notice that this algorithm
performs $g-1$ additions of rows for every $g$ rows brought from main memory.
Hence, as the ratio row additions/row accesses is so low,
it might not be able to extract all the computing power
from the processors, and the speed of the main memory
could ultimately define the overall performance of the implementation.
A first alternative to obtain better performances is
to perform fewer additions and accesses.
A second alternative is to increase the ratio
between row additions and row accesses.
Next algorithms will explore both choices.

The basic algorithm is straightforward in its implementation,
but it performs many additions of rows.  We present the following result.

\begin{lemma}
The cost in additions of the code
inside the innermost \textbf{for} loop is:
\[
   \left( \begin{array}{c} k \\ g \end{array} \right) ( g-1 ) n
\]
\end{lemma}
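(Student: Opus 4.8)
The statement is a direct counting exercise, so the plan is to simply enumerate the work done inside the innermost \textbf{for} loop (lines 4--8 of Algorithm~\ref{alg:basic}) for a fixed value of $g$ and a fixed $\Gamma$ matrix. First I would observe that the \textbf{while} loop iterates exactly once per combination of $g$ rows chosen from the $k$ rows of $\Gamma$, and there are $\binom{k}{g}$ such combinations. This accounts for the binomial factor.

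Next I would count the cost of a single call to \texttt{Process\_combination}. To add $g$ rows together, one accumulates them one at a time: start from the first row, then add the second, then the third, and so on, which amounts to $g-1$ vector additions. Each such vector addition is an addition of two vectors of length $n$, hence costs $n$ (scalar) additions over $\mathbb{F}_2$. So a single combination costs $(g-1)n$ additions. Multiplying the number of combinations by the per-combination cost gives $\binom{k}{g}(g-1)n$, which is exactly the claimed figure.

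The only modelling choices to be careful about are: (i) we are counting only row \emph{additions}, not the weight computation or the combination-generation bookkeeping, as the statement explicitly says ``cost in additions''; and (ii) we treat a length-$n$ vector addition as $n$ elementary additions, which is the natural unit here. Neither of these is a real obstacle; there is no inductive argument or inequality to push through, unlike the preceding lemma. If anything, the one point worth a sentence of justification is why adding $g$ rows takes $g-1$ (rather than $g$) vector additions — this is the standard fact that summing $g$ terms requires $g-1$ binary operations — and that the lexicographic generation order used in the implementation does not change this count.
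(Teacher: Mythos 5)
Your proof is correct and follows exactly the same counting argument as the paper: $\binom{k}{g}$ combinations, each requiring $g-1$ row additions of $n$ bits apiece, multiplied together. The extra remarks on what is and is not being counted are fine but not needed.
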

\begin{proof}
The number of different combinations of $k$ rows taken $g$ rows at a time is
$\binom{k}{g}$.
For every one of those combinations, the algorithm performs $g-1$ row additions.
Every row addition consists in $n$ additions of bits.
By multiplying these three factors, the initial formula is obtained.
\end{proof}

% -----------------------------------------------------------------------------
\subsection{Optimized algorithm}
% -----------------------------------------------------------------------------

In the basic algorithm presented above,
the order in which combinations are generated and processed
is not very important, except for cache effects.
On the other hand, in the optimized algorithm
the order in which the combinations are generated is more important,
since not all orders can help to reduce the number of additions.
The optimized algorithm will use the lexicographical order.
In this order, the indices in a combination change from the right-most part.
For instance, for 50 elements taken 3 elements at a time,
the combinations are generated in the following order:
$(0,1,2)$, $(0,1,3)$, $(0,1,4)$, $\ldots$, $(0,1,49)$,
$(0,2,3)$, $(0,2,4)$, $(0,2,5)$, $\ldots$, $(0,2,49)$,
$(0,3,4)$, $\ldots$

The advantage of the lexicographical order is that
each combination is very similar to the previous one.
In most cases of this order, there is only one difference between
one combination and the next one (or, equivalently, the previous one):
the last element.
Consequently,
in most cases the addition of the first $g-1$ rows performed
in one combination can be saved for the computation of the next combination,
thus saving many of them.
This method is outlined in Algorithm~\ref{alg:optimized}.

\begin{algorithm}[ht!]
  \caption{\ensuremath{\mbox{\sc Optimized algorithm}}}
  \label{alg:optimized}
  \begin{algorithmic}[1]
    \REQUIRE The generator matrix $G$ of the linear code $C$ with
             parameters $[n,k,d]$.
    \ENSURE  The minimum weight of $C$, i.e., $d$.
    \medskip
    \FOR{ $g = 1, 2, \ldots$ }
      \FOR{ every $\Gamma$ matrix ($k \times n$) of $G$ }
        \STATE // Process all combinations of the $k$ rows of $\Gamma$ 
                  taken $g-1$ at a time:
        \STATE ( done, c ) = Get\_first\_combination();
        \WHILE{( ! done )}
          \STATE Process\_all\_combinations\_starting\_with( c, $\Gamma$ );
          \STATE ( done, c ) = Get\_next\_combination( c );
        \ENDWHILE
      \ENDFOR
    \ENDFOR
  \end{algorithmic}
\end{algorithm}

The main structure of this algorithm is very similar to the basic one,
but there exist two major differences.
The first difference is that the combinations
are generated with $g-1$ elements instead of $g$ elements.
The second difference 
lies in the processing of the combinations.
The method \texttt{Process\_all\_combinations\_starting\_with}
receives a combination $c$ with $g-1$ elements,
then it adds the rows with indices in that combination
and, finally, it generates all the combinations of $g$ elements that start with
the received combination of $g-1$ elements by reusing the previous addition.
For instance, if $g=4$ and $c=(0,1,2)$,
it will first compute the additions of rows 0, 1, and 2,
and then it will reuse that addition to compute the
additions of the combinations $(0,1,2,3)$, $(0,1,2,4)$, \ldots, $(0,1,2,k-1)$,
thus saving $g-2$ (2 in this case) additions for every combination.

\begin{lemma}
The cost in additions of the code
inside the innermost \textbf{for} loop is:
\[
   [ \sum_{j=g}^{k-1} ( \left( \begin{array}{c} j-2 \\ g-2 \end{array} \right)
                      ( g+k-j-1 ) ) ] n
\]
\end{lemma}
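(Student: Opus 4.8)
The plan is to split the row-additions performed in one pass of the innermost \textbf{for} loop (with $g$ and the matrix $\Gamma$ fixed) into two groups: those spent building the sum of the first $g-1$ rows of a combination, and the single addition that appends the $g$-th row to such a prefix sum. The stated figure will then follow by counting how many of each are performed and multiplying by $n$, the cost in bit-additions of one row-addition.

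First I would make the reuse pattern precise. Because the $g$-element combinations are enumerated in lexicographic order, the combinations sharing a fixed prefix $c=(c_1,\dots,c_{g-1})$ form a contiguous block, namely $(c_1,\dots,c_{g-1},t)$ for $t=c_{g-1}+1,\dots,k-1$. Consequently \texttt{Process\_all\_combinations\_starting\_with} builds the partial sum $\mathrm{row}(c_1)+\dots+\mathrm{row}(c_{g-1})$ exactly once per prefix, at a cost of $g-2$ row-additions, and then spends one further row-addition for each $t$ in the block; a prefix whose last index already equals $k-1$ has an empty block, is never used as a prefix, and hence contributes nothing.

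Then I would count by grouping the prefixes according to their largest index $\ell := c_{g-1}$. For each admissible $\ell$ there are exactly $\binom{\ell}{g-2}$ prefixes, since the remaining $g-2$ indices form an arbitrary subset of $\{0,\dots,\ell-1\}$; such a prefix accounts for $g-2$ row-additions to build its sum plus $k-1-\ell$ row-additions for the extensions, i.e.\ $g+k-\ell-3$ in total. Summing $\binom{\ell}{g-2}(g+k-\ell-3)$ over the admissible $\ell$ and substituting $j=\ell+2$ turns $\binom{\ell}{g-2}$ into $\binom{j-2}{g-2}$ and $g+k-\ell-3$ into $g+k-j-1$, which is the bracketed expression; multiplying by $n$ finishes the argument. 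As a sanity check I would collapse the sum using the hockey-stick identity $\sum_{\ell}\binom{\ell}{g-2}=\binom{k-1}{g-1}$ and the Vandermonde-type identity $\sum_{\ell}\binom{\ell}{g-2}(k-1-\ell)=\binom{k}{g}$, obtaining the closed form $\bigl((g-2)\binom{k-1}{g-1}+\binom{k}{g}\bigr)n$, which matches the intuitive ``one prefix-build per used prefix, one append per combination'' bookkeeping.

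The delicate point is not the algebra but the reuse bookkeeping at the boundaries: one must verify that the contiguity property of lexicographic order really does guarantee that each partial sum is recomputed exactly once (so that leaving and later re-entering a prefix block cannot occur), and one must check that the extremal prefixes (those with maximal last index) and the small-$g$ cases ($g=1,2$, where $g-2\le 0$ and $\binom{\cdot}{g-2}$ must be read with the usual conventions) are consistent with the summation range. Once this ``build-once, append-once'' invariant is established, the remaining steps are routine.
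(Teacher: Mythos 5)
Your proposal is correct and takes essentially the same route as the paper's own proof: group the $(g-1)$-element prefixes by their last element, charge $g-2$ additions to build each extendable prefix plus one addition per extension, sum over that last element, and reindex (your $j=\ell+2$ is the paper's index shift), with your hockey-stick closed form $\bigl((g-2)\binom{k-1}{g-1}+\binom{k}{g}\bigr)n$ being a nice extra check. The boundary subtlety you flag is real but is shared by the paper itself: after the reindexing the sum actually runs up to $j=k$ (contributing the term $\binom{k-2}{g-2}(g-1)n$), so the upper limit $k-1$ in the lemma's displayed formula is off by one relative to both your count and the paper's pre-shift sum.
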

\begin{proof}
Assume that $c_{g-1}$ is set to the value $j$.
Then, we have $\binom{j-1}{g-2}$ different combinations
on the left part $(c_1, \ldots, c_{g-2})$.
Each of these different combinations require $g-2$ additions
and could be combined with any valid value of $c_{g}$ on the right part,
i.e., $k-j$ combinations.
So we have $\binom{j-1}{g-2}(g-2+k-j)$  additions.
Running through all the possible values of $j$: $j=g-1,\ldots, k-1$,
and considering $n$ bit additions per row we have:
\[
   [ \sum_{j=g-1}^{k-1} ( \left( \begin{array}{c} j-1 \\ g-2 \end{array} \right)
                        ( g+k-j-2 ) ) ] n .
\]

Modifying the initial index in the summation,
we get the initial formula.
\end{proof}

% -----------------------------------------------------------------------------
\subsection{Stack-based algorithm}
% -----------------------------------------------------------------------------

The number of row additions and row accesses can be further reduced
by using a stack of $g-1$ vectors of dimension $n$.
If a combination $c=(c_1,c_2,\ldots,c_{g-1})$ is being processed,
the stack will contain the following incremental additions:
$c_1,
 c_1 + c_2,
 c_1 + c_2 + c_3,
 \ldots,
 c_1 + c_2 + c_3 + \cdots + c_{g-1}$.
The memory used by the stack is not large at all
since both $n$ and $g$ are usually small,
and only one bit is needed for each element.
In all our experiments, $n$ was always smaller than 300, and 
the algorithm usually finished for values of $g$ equal to or smaller than 16.
(Larger values of $g$ would require a very long time to finish:
months or even years of computation in a modern computer.)
In overall, in our experiments the stack always used 
less than 1 KByte of memory.

In the optimized algorithm,
the number of additions for every combination of $g-1$ rows
is always the same ($g-2$ additions)
since the addition of the combination is computed from scratch.
The desired additions of the combinations of $g$ rows
are built on top of those additions
with just one extra addition for every combination of $g$ rows.

On the other hand, in the stack-based algorithm,
the number of additions for every combination of $g-1$ rows
can be reduced even further
if a stack is adequately employed and
the combinations are generated in an orderly fashion.
In this case the lexicograhical order was used, again,
because in this order the right-most elements change faster.

As the stack keeps the incremental additions
of the previous combination with $g-1$ elements,
it can be used to compute the addition of the current combination
with $g-1$ elements with a lower cost,
while at the same time the stack is updated.
The number of required additions depends on the left-most element
that will change from the previous combination to the current one,
since the stack will have to be rebuilt from that level.
Hence, 
to compute a combination of $g-1$ elements,
the minimum number of additions of the new algorithm is one, and
the maximum number of additions of the new algorithm is $g-2$.
Consequently,
to compute a combination of $g$ elements,
the minimum number of additions of the new algorithm is two, and
the maximum number of additions of the new algorithm is $g-1$.

The new stack-based method is outlined in Algorithm~\ref{alg:Stack}.
Notice how the structure of the new stack-based algorithm
is very similar to the previous one.
There is a new method called \texttt{Initialize\_stack} that initializes
the necessary data structures to hold the stack.
Now, the methods \texttt{Get\_next\_combination} and 
\texttt{Process\_all\_combinations\_starting\_with}
include a new parameter: the stack with the incremental additions of the rows.
The method \texttt{Get\_next\_combination} rebuilds the stack if needed,
and therefore it will require the stack as both an input argument and
an output argument.
It will also require the $\Gamma$ matrix as an input argument.
On the other hand, 
the method \texttt{Process\_all\_combinations\_starting\_with} 
uses the stack to compute the addition of the rows.

\begin{algorithm}[ht!]
  \caption{\ensuremath{\mbox{\sc Stack-based Algorithm}}}
  \label{alg:Stack}
  \begin{algorithmic}[1]
    \REQUIRE The generator matrix $G$ of the linear code $C$ with
             parameters $[n,k,d]$.
    \ENSURE The minimum weight of $C$, i.e., $d$.
    \medskip
    \STATE Initialize\_stack( stack );
    \FOR{ $g = 1, 2, \ldots$ }
      \FOR{ every $\Gamma$ matrix ($k \times n$) of $G$ }
        \STATE // Process all combinations of the $k$ rows of $\Gamma$ 
                  taken $g-1$ at a time:
        \STATE ( done, c ) = Get\_first\_combination();
        \WHILE{ ( ! done ) }
          \STATE Process\_all\_combinations\_starting\_with( c, stack, $\Gamma$ );
          \STATE ( done, c, stack ) = Get\_next\_combination( c, stack, $\Gamma$ );
        \ENDWHILE
      \ENDFOR
    \ENDFOR
  \end{algorithmic}
\end{algorithm}

\begin{lemma}
The cost in additions of the code
inside the innest \textbf{for} loop is:

\[
  ( \binom{k}{g}+\binom{k-1}{g-1}+\cdots+\binom{k-g+2}{2} ) n
\]
\end{lemma}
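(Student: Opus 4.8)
The plan is to split the row additions executed inside the innermost \textbf{for} loop into two groups: the \emph{extension} additions, performed inside \texttt{Process\_all\_combinations\_starting\_with} when a prefix of length $g-1$ is completed to a full combination of length $g$; and the \emph{rebuild} additions, performed inside \texttt{Get\_next\_combination} when the incremental stack is brought up to date for the next combination. I would count each group separately and add the results, multiplying by $n$ at the end since every row addition consists of $n$ bit additions.

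For the extension group, note that when a combination $(c_1,\dots,c_{g-1})$ is processed the top entry of the stack already stores $c_1+\dots+c_{g-1}$, so for each admissible last index $c_g\in\{c_{g-1}+1,\dots,k-1\}$ exactly one row addition produces the codeword associated with $(c_1,\dots,c_{g-1},c_g)$. Summing the number of admissible $c_g$ over all processed prefixes amounts to counting all $g$-subsets of $\{0,\dots,k-1\}$, so this group contributes $\binom{k}{g}$ row additions.

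For the rebuild group I would argue level by level. The crucial observation, which relies on the lexicographic order, is that the stack entry at depth $i$ (which holds $c_1+\dots+c_i$) is rebuilt exactly once for each \emph{distinct} prefix $(c_1,\dots,c_i)$ occurring during the enumeration: once a depth-$i$ prefix is installed it stays correct until that prefix changes, and when it changes the deeper part of the stack is rebuilt from depth $i$ downward. Each such rebuild at depth $i\ge 2$ costs a single row addition (depth $1$ is a plain copy of a row, not an addition), because the depth-$(i-1)$ entry is already current. It then remains to count the distinct depth-$i$ prefixes: since the enumeration only needs to visit combinations that admit at least one extension, a strictly increasing tuple $(c_1,\dots,c_i)$ appears as a depth-$i$ prefix precisely when it can be completed to a $g$-subset of $\{0,\dots,k-1\}$, i.e.\ when $c_i\le k-g+i-1$, and the number of such tuples is $\binom{k-g+i}{i}$. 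Hence the rebuild group contributes $\sum_{i=2}^{g-1}\binom{k-g+i}{i}$ row additions.

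Finally I would re-index and combine: writing $\sum_{i=2}^{g-1}\binom{k-g+i}{i}=\binom{k-1}{g-1}+\binom{k-2}{g-2}+\dots+\binom{k-g+2}{2}$ and adding the extension term $\binom{k}{g}$ gives $\binom{k}{g}+\binom{k-1}{g-1}+\dots+\binom{k-g+2}{2}$, which multiplied by $n$ is the stated cost. (For $g=1$ the stack has no levels and no additions occur, and the displayed sum is empty, so the formula still holds.) I expect the main difficulty to be the bookkeeping in the rebuild group: proving rigorously that under the lexicographic order each depth-$i$ entry is rebuilt exactly once per distinct prefix and never on a ``dead'' prefix, and pinning down the count $\binom{k-g+i}{i}$ of productive depth-$i$ prefixes; the extension count and the closing summation are then routine.
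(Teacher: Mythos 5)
Your proof is correct, and it reaches the stated formula by a different, though closely related, decomposition from the paper's. The paper argues per full combination: it shows that a combination whose trailing run of consecutive indices has length exactly $r$ costs $r$ additions, introduces the sets $A_i$ of combinations whose last $i$ entries are consecutive, notes that such a combination lies in exactly $r$ of these sets, and sums the cardinalities $|A_i|$. You instead charge each addition to the event that produces it: one extension addition per full $g$-combination ($\binom{k}{g}$ in total) and one rebuild addition per distinct productive prefix at each stack depth $i=2,\ldots,g-1$ ($\binom{k-g+i}{i}$ each), using the fact that under the lexicographic order all combinations sharing a depth-$i$ prefix appear consecutively, so each such prefix is installed exactly once and each rebuild costs a single addition. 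The two decompositions in fact match term by term: a combination lies in $A_i$ precisely when it is determined by its length-$(g-i+1)$ prefix, so $|A_i|=\binom{k-i+1}{g-i+1}$ is exactly your count of productive prefixes of that length. What your version buys is an accounting tied directly to the algorithm's operations, and it makes explicit an assumption the paper leaves implicit: that no additions are spent on ``dead'' prefixes that admit no extension to a full $g$-combination (a naive enumeration of all $(g-1)$-combinations would incur extra rebuilds); both proofs rely on this, and you are right to flag it as the delicate point. Your treatment of depth $1$ as a copy rather than an addition also agrees with the paper's count (the first combination costs $g-1$ additions in either accounting). The only nitpick is the $g=1$ remark: read literally, the displayed sum is not empty there (its first term would be $\binom{k}{1}$), so that degenerate case is a matter of convention; the paper tacitly assumes $g\ge 2$ as well.
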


\begin{proof}

We define the following sets of combinations for $i=1,\ldots,g-1$:
$$
A_i=\{(c_1,\ldots,c_g)\mid c_g=c_{g-1}+1=c_{g-2}+2=\ldots=c_{g-i+1}+i-1\}
$$
i.e., at least the last $i$ elements $(c_{g-i+1},\ldots,c_g)$ are consecutive.

Let us consider now a combination $(c_1,\ldots,c_g)$.
Assume that the last $r$ elements $(c_{g-r+1},\ldots,c_g)$ are consecutive,
but the last $r+1$ are not consecutive, i.e., $c_{g-r}+1\ne c_{g-r+1}$.
Then, the combination $(c_1,\ldots,c_g)$ requires no additions for the rows 
$(c_1,\ldots,c_{g-r})$ because the contents of the stack is reused.
However, $r$ additions are needed 
to update the higher levels of the stack with $(c_{g-r+1},\ldots,c_g)$.
So, we conclude that a combination with exactly $r$ consecutive elements at
the end requires $r$ additions.

The key observation is the following: a combination $(c_{1},\ldots,c_g)$
with exactly
 $r$ consecutive elements at the end is contained in exactly $r$ sets
$A_1,\ldots, A_r$.
Therefore, the number of additions required for a combination is equal to
the number of sets $A_i$ where it is contained.

We conclude that the total number of additions required is equal to the sum
of the cardinals of the sets $A_1,\ldots, A_{g-1}$.
So, in the rest of the proof we are going to calculate the cardinals of
these sets.

Notice that $A_1$ imposes no constrains, i.e., any combination
$(c_1,\ldots,c_g)$ is contained in $A_1$, therefore there are
$\binom{k}{g}$ elements in $A_1$.
Now we consider $A_2$.
A combination
$(c_1,\ldots,c_g)$  is in $A_2$ if and only if $c_g=c_{g-1}+1$.
The latter condition implies that when $(c_1,\ldots,c_{g-1})$ is fixed,
then $c_g$ is automatically fixed.
So, we have $\binom{k-1}{g-1}$ different combinations 
for $(c_1,\ldots,c_{g-1})$.
In general,  $(c_1,\ldots,c_g)$ is contained in $A_i$ if and only if
$c_g=c_{g-1}+1=c_{g-2}+2=\cdots=c_{g-i+1}+i-1$.
The latter condition implies that when $(c_1,\ldots,c_{g-i})$ is fixed,
then $(c_{g-i+1},\ldots,c_g)$ is automatically fixed.
So, we have $\binom{k-i}{g-i}$ possibilities for $(c_1,\ldots,c_{g-i})$.

Adding up the cardinals of $A_i$ for $i=1,\ldots, g-1$ we get
the following formula:
$\binom{k}{g}+\binom{k-1}{g-1}+\cdots+\binom{k-g+2}{2}$.
Finally, considering $n$ bit additions per row we get the initial formula
of this lemma.
\end{proof}

% -----------------------------------------------------------------------------
\subsection{Algorithm with saved additions}
% -----------------------------------------------------------------------------

The basic algorithm performs $g-1$ additions
to process every generated combination.
The optimized algorithm performs only one addition in many cases,
and it performs $g-1$ additions in the rest of cases.
The stack-based algorithm performs only one addition in many cases,
and it performs between two and $g-1$ additions in the rest of cases.
Now we present an algorithm that performs the 
same low number of additions ($\floor{(g-1)/s}$)
to process every generated combination.

The new algorithm always performs a fixed lower number of
additions for all combinations by using a larger additional storage.
Its main advantage is its smaller cost.
Its main disadvantage is the extra memory space needed,
but this issue is not a serious handicap,
as the availability of main memory in current computers is usually
very large.

For every $\Gamma$ matrix, this algorithm saves in main memory
the additions of all the combinations of
the $k$ rows taken $g$ at a time for values of $g = 1, 2, \ldots, s$.
The value $s$ is fixed at the beginning of the program,
and it determines the maximum amount of memory used.
In our experiments, we employed values of $s$ up to 5,
since it rendered good performances 
and larger values required too much memory.
For instance, with $s=5$
this algorithm required a storage of around 95 MBytes
for processing the new linear codes presented in this paper,
which is not an excessive amount,
while rendering good performances.

The saved additions of the combinations of $k$ rows
taken 1, 2, $\ldots$, $s$ rows at a time
will be then used to build the additions of the combinations of $k$ rows
taken $s+1$, $s+2$, $\ldots$ rows at a time.
This idea is simple, 
but the problem is to be able to implement it in a very efficient way.

If these additions are saved in the lexicographical order of the 
row indices in the combinations, it is really efficient to combine them. 
This order is key to this algorithm.

Next, we describe some details in the most simple case.
If $g = a + b$ with positive numbers $a$ and $b$ such that $a \leq s, b \leq s$,
the addition of the rows of the combination $c$ with indices
$(c_1, c_2, \ldots, c_{a}, c_{a+1} \ldots, c_{g})$
can be computed as the addition of the rows of the following combinations:
the combination $(c_1, c_2, \ldots, c_{a})$ (called left combination) and
the combination $(c_{a+1} \ldots, c_{g})$ (called right combination).
In this way, with just one addition the desired result can be obtained
if we have previously saved
the additions of the combinations of $k$ rows 
taken up to at least $\max(a,b)$ at a time.

Therefore, 
if $g = a + b$, to obtain the combinations of $k$ rows taken $g$ at a time,
the combinations of $k$ rows taken $a$ at a time (left combinations) and 
the combinations of $k$ rows taken $b$ at a time (right combinations)  
must be combined.
However, not all those combinations must be processed.
There is one restriction to be applied to the left combinations,
and another one to be applied to the right combinations.
Next, both of them are described.
Note that it is very important that these restrictions must be applied
efficiently to accelerate this algorithm.
Otherwise, an important part of the performance gains could be lost.

In the case of the left combinations,
not all of the combinations must be processed.
For instance, if $k=50$, $a=3$, and $b=2$,
left combinations starting with 46 or larger indices 
should be discarded
since no right combination can be appended to form a valid combination
(as an example, left combination $(46,47,48)$
cannot be concatenated to any right combination of two elements 
to form a valid combination).
If the saved additions of the combinations of $k$ elements taken $a$ at a time
are kept in the lexicographical order of the combinations, 
the following formula returns the index of the first combination 
in the saved combinations that must not be processed:

\[
   \left( \begin{array}{c} k     \\ a \end{array} \right) -
   \left( \begin{array}{c} g - 1 \\ a \end{array} \right)
\]

In the case of the right combinations,
not all of the combinations must be processed.
The last element in the left combination $(c_1, c_2, \ldots, c_{a})$ 
will define the right combinations with which this can be combined,
since 
it can only be combined with combinations starting with $c_{a}+1$ 
or larger values.
If the saved additions of the combinations are kept
in the lexicographical order of the combinations,
we can compute easily which combinations of $k$ rows taken $b$ at a time
must be processed.
The formula that returns the index of the first right combination 
to be combined is the following one, 
where $e$ is the last element in the left combination:

\[
   \left( \begin{array}{c} k         \\ b \end{array} \right) -
   \left( \begin{array}{c} k - e - 1 \\ b \end{array} \right)
\]

A general recursive algorithm that works for any $k$, any $g$, and any $s$ 
has been developed.
Though recursive algorithms can be slow,
ours is really fast because
the cost of the tasks performed inside each call is high,
and the maximum depth of the recursion is $\ceil{g/s}$.
The general method is outlined in Algorithm~\ref{alg:saved}.

The data structure that stores the saved additions
of the combinations of the rows of every $\Gamma$ matrix
must be built in an efficient way.
Otherwise, the algorithm could underperform for 
matrices that finish after only a few generators.
For every $\Gamma$ matrix, this data structure contains several 
levels ($l = 1, \ldots, s$),
where level $l$ contains all the combinations of the $k$ rows of the
$\Gamma$ matrix taken $l$ at a time.
The way to do it in an efficient way is 
to use the previous levels
of the data structure to build the current level of the data structure.
In our algorithms, 
to build level $l$, levels $l-1$ and $1$ were used.
This combination must be performed in a way that 
both keeps the lexicographical order in level $l$
and is efficient.

\begin{algorithm}[ht!]
  \caption{\ensuremath{\mbox{\sc Algorithm with Saved Additions}}}
  \label{alg:saved}
  \begin{algorithmic}[1]
    \REQUIRE The generator matrix $G$ of the linear code $C$ with
             parameters $[n,k,d]$.
    \ENSURE The minimum weight of $C$, i.~e., $d$.
    \medskip

    \STATE Initialize\_data\_structures\_for\_storing\_additions( SA );
    \FOR{ $ g = 1, 2, \ldots $ }
      \FOR{ every $\Gamma_i$ matrix ($k \times n$) of $G$ }
        \IF{ $g \leq s$ }
          \STATE Generate and save all combinations of $g$ rows 
                 of $\Gamma_i$ into $\textrm{SA}_i$ ;
        \ENDIF
        \STATE Process\_step( $\textrm{SA}_i$, $g$, $\O$ );
      \ENDFOR
    \ENDFOR
    \STATE End of Algorithm

    \medskip

    \STATE Method Process\_step( $\textrm{SA}_i$, $g$, $c$ ) :
    \STATE $ a := \min( g, s ) $;
    \STATE $ b := g - a $;
    \IF{ $ a < s $ }
      \STATE Compute the minimum distance of $\textrm{SA}_i$
             adding $c$ to suited combinations;
    \ELSE
      \FOR{ $j$ = index\_of( $k$, $a$, last\_element\_of( $c$ ) ) to
                  index\_of( $k$, $a$, $k - g$ ) }
        \STATE $e$ = $j$-th combination saved in $SA_i$;
        \IF{ $\textrm{last\_element\_of}( $e$ ) + b < k$}
          \STATE Process\_step( $\textrm{SA}_i$, $b$, $c + e$ );
        \ENDIF
      \ENDFOR
    \ENDIF
    \STATE End of Method

    \medskip

    \STATE Function index\_of( $p$, $q$, $r$ ) :
    \STATE Return $\left( \begin{array}{c} p         \\ q \end{array} \right) - 
                   \left( \begin{array}{c} p - r - 1 \\ q \end{array} \right)$
    \STATE End of Function
  \end{algorithmic}
\end{algorithm}

\begin{lemma}
The cost in additions of the code
inside the loop in line 3 of Algorithm~\ref{alg:saved}
(equivalent part in the previous algorithms) is:
\[
   \left( \begin{array}{c} k \\ g \end{array} \right) n \floor{(g-1)/s}
\]
\end{lemma}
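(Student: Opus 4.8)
The plan is to express the cost as a product of three factors: the number $\binom{k}{g}$ of combinations of $g$ rows of a $\Gamma$ matrix, the number of \emph{row} additions used to assemble the row sum of one such combination from the stored partial sums, and the $n$ bit additions contained in a single row addition. The heart of the argument is therefore to show that assembling one combination of $g$ rows costs exactly $\ceil{g/s}-1$ row additions; once that is established, the formula follows from the identity $\ceil{g/s}-1=\floor{(g-1)/s}$.

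First I would trace how \texttt{Process\_step} decomposes a combination. On each call it puts $a:=\min(g,s)$ and, while $a=s$, recurses on the remaining length $b:=g-s$; the base case ($a<s$) is reached once the remaining length falls below $s$. So a combination $(c_1,\dots,c_g)$ is cut into consecutive blocks of sizes $s,s,\dots,s$ followed by a final block of size $g-\floor{g/s}\,s$, that is, into $\ceil{g/s}$ nonempty blocks (the final block being absent exactly when $s\mid g$). The row sum of each such block, being a combination of at most $s$ rows, has already been stored in $\textrm{SA}_i$ by the ``Generate and save'' step, so building the row sum of the whole combination is just the summation of $\ceil{g/s}$ precomputed vectors, i.e.\ $\ceil{g/s}-1$ row additions. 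I would then confirm that the recursion performs precisely this count: the leftmost block is carried in the accumulator $c$ at no cost (the outer call starts with an empty accumulator); each further recursive call contributes one addition ``$c+e$''; the base case contributes one addition per suited right combination when the accumulator is nonempty, and none when it is empty (the case $g\le s$, in which the combinations are read straight out of storage); and an absent final block contributes nothing. Summing, every combination of $g$ rows costs $\ceil{g/s}-1$ row additions in all cases, including $g\le s$ where this is $0$. Multiplying by $\binom{k}{g}$ and by $n$ gives $\binom{k}{g}\,n\,(\ceil{g/s}-1)$, and since $\ceil{g/s}-1=\floor{(g-1)/s}$ for all positive integers $g,s$ (write $g-1=qs+t$ with $0\le t<s$: then $\floor{(g-1)/s}=q$ while $g=qs+(t+1)$ with $1\le t+1\le s$, so $\ceil{g/s}=q+1$), the claimed formula follows.

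The hard part will be making the block-decomposition step fully rigorous: one must check that the greedy choice $a=\min(g,s)$ together with the \texttt{index\_of} restrictions reaches every combination of $g$ rows exactly once and with exactly this block structure, so that counting the additions block by block is legitimate, and one must handle the boundary situations carefully --- the ``free'' leftmost block, the absent final block when $s\mid g$, and the base case $g\le s$ in which \texttt{Process\_step} performs no additions at all. Everything after that is routine arithmetic.
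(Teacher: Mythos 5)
Your proof is correct and takes essentially the same approach as the paper: count $\lfloor (g-1)/s\rfloor$ row additions per combination, multiply by the $\binom{k}{g}$ combinations and by $n$ bit additions per row. The paper's proof simply declares the per-combination count ``obvious,'' whereas you justify it via the decomposition into $\lceil g/s\rceil$ stored blocks and the identity $\lceil g/s\rceil-1=\lfloor (g-1)/s\rfloor$, so yours is just a fleshed-out version of the same argument.
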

\begin{proof}
Obvious, since this algorithm performs 
only $\floor{(g-1)/s}$ additions per combination.
\end{proof}

Although this algorithm performs much fewer additions than previous ones,
it has one drawback: 
the amount of data being stored and processed is much larger.
Even though the amount of data being stored is not prohibitive 
(around 50 MB in our experiments), 
the processing of those data will produce
many more cache misses than previous algorithms.
Recall that previous algorithms must just process a few $\Gamma$ matrices
of dimension $k \times n$, 
whereas this algorithm must process a few matrices 
of dimension $ \left( \begin{array}{c} k \\ g \end{array} \right) \times n$,
for $g = 1,\ldots, s$.
In the first case, those few $\Gamma$ matrices can be stored in
the first levels of cache memory, 
whereas in the second case the matrices with the combinations 
will not usually fit there.

% -----------------------------------------------------------------------------
\subsection{Algorithm with saved additions and unrollings}
% -----------------------------------------------------------------------------

All the algorithms described above perform $g-1$ row additions
for every $g$ rows brought from main memory,
being the only difference among them the number of total operations.
The goal of all of them, except the basic one,
is to reduce the total number of row additions and row accesses.
Since the ratio row additions/row accesses is so low (close to one), and
as main memory is much slower than computing cores,
this low ratio might reduce performances 
when the memory system is specially slow or saturated.

We have developed
an algorithmic variant of the algorithm with saved additions
that improves performance by increasing the ratio additions/memory accesses,
in order to be less memory-bound.
The main and only difference is that
several combinations are processed at the same time,
and whenever one row is brought from main memory, 
it will be reused as much as possible
in order to decrease the number of memory acceses.
This technique is called \textit{unrolling}, 
and it is widely used in high-performance computing.
This technique will reduce the number of memory accesses,
and consequently the number of cache misses since
data are reused when transferred from main memory.

For instance,
by processing two combinations at the same time,
the number of rows accessed can nearly be halved
since each accessed row is used twice,
thus doubling the ratio row additions/row acceses.
Processing three iterations at a time would improve this ratio
even further.
If more row additions per every row access are performed,
the fast computing cores will work closer to their limits, and
main memory will be removed as the limiting performance factor.

In our experiments we have tested the processing of two combinations at a time,
and the processing of three combinations at a time.
We have not evaluated higher numbers because of the diminishing returns.

The loop in line 17 of algorithm~\ref{alg:saved} processes
one combination in each iteration of the loop.
To process two combinations at a time, this loop should be modified
to process two iterations of the old loop in each iteration of the new loop.

But executing two or more iterations at a time 
is more effective when the last element of them are exactly the same.
When the last element is the same, both left combinations must be
combined with the same subset of right combinations.
In contrast, when the last element is different, 
each left combination must be combined 
with a different subset, and therefore it is not so effective to blend them.
For instance, left combinations $(0,1,4)$ and $(0,2,4)$ can be executed
at the same time, 
whereas in left combinations $(0,1,2)$ and $(0,1,3)$ is not so effective.

If the lexicographical order is employed,
the last element of the combinations is the one that changes most.
In this case, consecutive combinations usually contain different last elements,
and the unrolling will not be so effective.
Therefore, a new order must be used.
The requirements for this new order are two-fold:
The first one is that 
the first element must be the one that changes least
so that we can efficiently access all combinations starting with
some given element (the last one of the previous combination plus one).
The second one is that the last element must change very little,
in order to be able to blend as many consecutive combinations as possible.
Hence, the order we have used is a variation of the lexicographical
one in which the element that changes least is the first one,
then the last one, and then the rest.
For instance, if $k=5$ and $g=3$, 
the order is the following one:
$(0,1,2)$,
$(0,1,3)$,
$(0,2,3)$,
$(0,1,4)$,
$(0,2,4)$,
$(0,3,4)$,
etc.

% =============================================================================
\section{Implementation and optimization details}
\label{sec:implementation_and_optimization}
% =============================================================================

% -----------------------------------------------------------------------------
\subsection{Parallelization of the basic algorithm}
% -----------------------------------------------------------------------------

The outer loop \texttt{For g} (line 1 of Algorithm~\ref{alg:basic}) 
cannot be parallelized
since its number of iterations is not known \textit{a priori}.
Recall that the iterative process can finish after any iteration of 
the $g$ loop (whenever the minimum weight is larger than some value).
In addition,
the cost of every iteration of the $g$ loop is extremely different 
(the cost of each iteration is usually larger than the cost of all
the previous iterations).
Furthermore, $g$ is usually a small number.
For instance, in our experiments, $g$ was always smaller than or equal to 16.
As the number of cores can be higher,
the parallelization of this loop 
would not take advantage of all the computer power.
Because of all of these causes, the parallelization of this loop 
must be discarded.

The middle loop \texttt{For every $\Gamma$} 
(line 2 of Algorithm~\ref{alg:basic}) 
can be easily parallelized
(by assigning a different $\Gamma$ matrix to every core).
Despite the cost of processing every $\Gamma$ matrix is very similar,
in order to be able to parallelize this algorithm
there should be as many or more $\Gamma$ matrices than computing cores.
However, there are usually many more cores than $\Gamma$ matrices.
For example, in our most time-consuming experiments 
there were 5 $\Gamma$ matrices,
whereas current computers can have a larger number of cores.
So this solution must also be discarded due to its inefficiency and
lack of potential scalability.

The \texttt{while} loop (line 5 of Algorithm~\ref{alg:basic}) 
can be easily parallelized,
but it has an important drawback that 
makes the parallelization inefficient:
To parallelize that loop,
the invocation of \texttt{Get\_next\_combination} (line 7) should be inside a
critical region so that only one thread can execute this method at a time.
Otherwise, two threads could get the same combination or, worse yet,
one combination could be skipped.
This parallelization strategy works fine 
for a very low number of threads (about 2),
but when using more threads,
the method \texttt{Get\_next\_combination} becomes a big bottleneck,
and performances drop significantly.

In conclusion,
despite how simple the structure of the basic algorithm is, 
its parallelization will not usually render high performances.
Even in some cases its parallelization could render much lower performances 
than the serial algorithm.

% -----------------------------------------------------------------------------
\subsection{Parallelization of the optimized algorithm}
% -----------------------------------------------------------------------------

As the structure of the optimized algorithm is so similar to the
structure of the basic algorithm,
its parallelization is going to present the same drawbacks.
Although 
the concurrent method \texttt{Process\_all\_combinations\_starting\_with}
has a larger cost than the analogous one of the previous algorithm, 
the critical region in method \texttt{Get\_next\_combination} 
would make performances drop when the number of cores is slightly larger.
Therefore, the parallelization of this algorithm must be discarded too.

% -----------------------------------------------------------------------------
\subsection{Parallelization of the stack-based algorithm}
% -----------------------------------------------------------------------------

The structure of the stack-based algorithm is very similar to the previous one,
and therefore it has the same drawbacks.
Its main difference with the optimized algorithm is 
that the cost of the method \texttt{Get\_next\_combination} is larger
since the stack must be rebuilt in some cases.
This fact makes the parallelization of this algorithm even less appropiate 
since the cost of the critical region is larger.

% -----------------------------------------------------------------------------
\subsection{Parallelization of the algorithms with saved additions}
% -----------------------------------------------------------------------------

The structures of 
the algorithm with saved additions
and the algorithm with saved additions and unrollings
are very similar.
So, both of them will be tackled at the same time.

The parallelization of the algorithms with saved additions
is very different from the previous ones.
As the combinations are already generated and 
the additions of those are saved in vectors,
its parallelization does not require the use of a large critical section,
thus rendering higher potential gains in the parallel implementations.
The loop that must be parallelized is the \texttt{for} loop of the
\texttt{else} branch, but it must only be parallelized 
for the first level of the recursion.
Hence, we have an algorithm, the algorithm with saved additions,
that is both efficient and parallel.

In the parallelization of this algorithm,
a dynamic scheduling strategy must be used 
since the cost of processing every element of the vector 
(subcombination) is very different.
For instance, if $k=50, g=6, s=3$,
processing iteration $(0,1,2)$ would require much longer than 
processing iteration $(0,1,46)$.
In the first case, many combinations must be processed:
$(0,1,2,3,4,5)$, $(0,1,2,3,4,6)$, \ldots.
In the second case, the only combination to be processed would be:
$(0,1,46,47,48,49)$.
In our parallelized codes, we used OpenMP~\cite{OpenMP} to achieve
the dynamic scheduling strategy of mapping loop iterations to cores.

We used a small critical section for updating the overall minimum weight.
However, we minimized the impact of this critical section by making
every thread work with local variables, and by updating the global
variables just once at the end.

% -----------------------------------------------------------------------------
\subsection{Vectorization and other implementation details}
% -----------------------------------------------------------------------------

Usual scalar instructions allow to process
one byte, one integer, one float, etc.\ at a time.
In contrast, hardware vector instructions allow to process 
many numbers at a time, thus improving performances.
Vector instructions use vector registers,
whose size depends on the architecture.
For example, while older x86 architectures used 128-bit registers, modern architectures use 256-bit or even 512-bit registers.
With so wide vector registers, just one vector instruction can process a considerable number of elements.

However, one drawback of the vector instructions is its standardization.
Every new architecture contributes new vector instructions,
and older architectures do not support the new vector instructions.
Therefore, developing a vector code is not straightforward because it will
depend on the architecture.

One of the commercial and most-employed implementations, \magma{},
only allows the use of hardware vector instructions 
on the newest architectures with AVX support, 
and not on processors with SSE.
Unlike \magma{}, our implementations are more flexible,
and they can use hardware vector instructions
both in processors with SSE and AVX support,
that is, in both old and new processors, both from Intel and AMD.

% -----------------------------------------------------------------------------
%\subsection{Other Implementation Details}
% -----------------------------------------------------------------------------

When developing a high-performance implementation,
the algorithm is very important,
but then even some small implementation choices can greatly affect 
the performances.

We used the C language since it is a compiled language, 
and therefore it usually renders high performances.
We have also chosen it because of its high portability.

In our implementations, we used the 32-bit integer as the basic datatype,
thus packing 32 elements into each integer.
This provides high performances for the scalar implementations
since just one scalar instruction can process up to 32 elements.
We also tested 64-bit integers since in one operation more elements 
would be processed, but performances dropped.
We think that the cause of this drop 
is the additional elements that must be processed 
when $n$ is not a multiple of the number of bits of the basic datatype.
When 64-bit integers are used,
$\textrm{mod}(n/64)$ additional elements must be processed.
On the other hand, 
when 32-bit integers are used,
$\textrm{mod}(n/32)$ additional elements must be processed,
which is usually a smaller overhead.

% =============================================================================
\section{Performance analysis}
\label{sec:performance_analysis}
% =============================================================================

This section describes and analyizes the performance results 
attained by our implementations,
comparing them with state-of-the-art software that perform the same tasks.
The experiments reported in this article were performed 
on the following two computing platforms:

\begin{itemize}

\item
\texttt{Cplex}: 
This computer was based on AMD processors.
It featured an AMD Opteron\texttrademark\ Processor 6128 (2.0 GHz), 
with 8 cores.
Its OS was GNU/Linux (Version 3.13.0-68-generic).
Gcc compiler (version 4.8.4) was used.
In the experiments we usually used up to 6 cores (of the 8 cores it had)
since we were not the only users.
Note that some of the experiments presented in this section lasted for weeks.

\item
\texttt{Marbore}: 
This computer was based on Intel processors.
It featured two Intel Xeon\circledR\ CPUs E5-2695 v3 (2.30 GHz), 
with 28 cores in total.
Its OS was GNU/Linux (Version 2.6.32-504.el6.x86\_64).
Gcc compiler (version 4.4.7) was used.
As this was a dedicated computer, we used all its cores in the experiments.
In this computer the so-called {\em Turbo Boost} mode of the CPU was turned off
in our experiments.

\end{itemize}

Our implementations were compared with the two most-used 
implementations currently available:

\begin{itemize}

\item
\magma{}~\cite{Magma}:
\magma{} is a commercial software package designed for computations in algebra,
algebraic combinatorics, algebraic geometry, etc.
Given the limitations in its license, 
it was installed only in the AMD computer,
and it was not in the Intel computer.
\magma{} Version 2.22-3 was employed in our experiments.

As \magma{} only allows the use of hardware vector instructions 
on the newest processors with AVX support, and not on processors with SSE,
we could not use this feature in the AMD computer.
Therefore, the version of \magma{} we evaluated only used scalar instructions.
On the contrary, our implementations are more flexible,
and they can use hardware vector instructions
both on processors with SSE and AVX support,
that is, on both old and new processors, both from Intel and AMD.

We evaluated both serial and parallel \magma{} since it allows the use 
of both one and several cores on multi-core architectures.

\item
\guava{}~\cite{GAP,Guava}:
GAP (\textit{Groups, Algorithms, Programming}) is a software environment
for working on computational discrete algebra and computational group theory.
It includes a package named \guava{} that contains software to compute 
the minimum weight of linear codes.
It is public domain and free, 
and thus it will be evaluated on both architectures.
\guava{} Version 3.12 and GAP Version 4.7.8 were employed in our experiments.

Since \guava{} does not allow the use of hardware vector instructions,
it only uses scalar instructions.
As \guava{} does not allow the use of multiple cores,
we evaluated it only on one core.
Unlike \guava{}, our implementations can use 
both scalar and hardware vector instructions,
both on one and on several cores.

\end{itemize}

In the first subsection of this section we will compare the 
algorithms described in this paper.
In the second subsection of this section we will compare the 
best implementations developed in this paper 
and the implementations available on one core.
In the third subsection of this section we will compare the 
best implementations developed in this paper 
and the implementations available on multicore machines.
In the fourth section we will explore the scalability and parallel performance
of the implementations.

% -----------------------------------------------------------------------------
\subsection{A comparison of the algorithms and implementations 
            described in this paper}
% -----------------------------------------------------------------------------

Table~\ref{table:inside} reports the time spent by the algorithms 
to compute the minimum distance of 
a random linear code with parameters [150,50,28].
The performances are very encouraging since
the best algorithm is more than 6 times as fast as the worst one.
This improvement reflects the qualities and virtues of some of our algorithms.
The optimized algorithm improves the performances of the basic one 
since it performs fewer additions.
The stack-based algorithm improves the performances of the optimized one 
since it performs even fewer additions.
The method based on storing combinations performs even fewer additions
by storing and reusing previous additions, and thus it renders higher 
performances.
The algorithm with saved additions and unrollings performs exactly the
same additions as the algorithm with saved additions,
but it performs fewer memory accesses, 
thus attaining better performances in one of the two computers:
the one with the slowest memory system.
For larger codes the performances of the last algorithm were better than
those shown in the table, since the number of saved data was larger
and the memory started to become a serious bottleneck.

\begin{table}[ht!]
\vspace*{0.4cm}
\begin{center}
\begin{tabular}{lrr}
	\toprule
  \multicolumn{1}{c}{Implementation} & 
  \multicolumn{1}{c}{\texttt{cplex}} & 
  \multicolumn{1}{c}{\texttt{marbore}} \\ \midrule
  Basic                                          & 511.5 & 273.1 \\
  Optimized                                      & 191.0 & 105.3 \\ 
  Stack-based                                    & 144.5 &  84.6 \\ 
  Saved additions with $s=5$                     &  89.8 &  44.7 \\ 
  Saved additions with $s=5$ and unrollings by 2 &  74.0 &  44.6 \\ 
  \bottomrule
\end{tabular}
\end{center}
\vspace*{-0.4cm}
\caption{Time (in seconds) for the different implementations
         on a linear code with parameters [150,50,28].}
\label{table:inside}
\end{table}

% -----------------------------------------------------------------------------
\subsection{A comparison of the best implementations on one core}
% -----------------------------------------------------------------------------

Table~\ref{table:singlecore1} compares our best implementations
and the two best implementations available, \magma{} and \guava{},
for some linear codes of medium size on one core of \texttt{cplex}.

As \magma{} cannot use hardware vector instructions in the computer 
used in the experiments,
we could only evaluate it with scalar instructions.
As \guava{} cannot use hardware vector instructions at all,
we could only evaluate it with scalar instructions.

We show two implementations of our best algorithm: 
the first one is the usual implementation that only uses scalar instructions, 
and the second one is an implementation that uses hardware vector instructions.

Both of our new implementations clearly outperform the other two in all cases.
Our new implementations are several times faster than the current ones.
The performance improvement is remarkable in these cases: 
Our scalar implementation is in average 3.24 times as fast as \magma{}, and
our vector implementation is in average 5.75 times as fast as \magma{}.
Our scalar implementation is in average 2.62 times as fast as \guava{}, and
our vector implementation is in average 4.67 times as fast as \guava{}.

\begin{table}[ht!]
\vspace*{0.4cm}
\begin{center}
\begin{tabular}{crrrr}
	\toprule
    \multicolumn{1}{c}{Code} & 
    \multicolumn{1}{c}{\magma{}} & 
    \multicolumn{1}{c}{\guava{}} & 
    \multicolumn{1}{c}{Scalar Saved} & 
    \multicolumn{1}{c}{Vector Saved} \\ \midrule
  $[150,50,28]$ &     161.1 &     193.6 &      74.0 &      38.9 \\  
  $[130,67,15]$ &   1,980.0 &   1,585.5 &     574.8 &     331.3 \\  
  $[115,63,11]$ &   5,056.7 &   3,703.7 &   1,292.8 &     755.1 \\  
  $[102,62,12]$ &  20,585.9 &  14,356.8 &   5,258.0 &   3,047.5 \\  
  $[150,77,17]$ &  53,052.9 &  40,804.3 &  19,262.2 &  10,245.4 \\  
    \bottomrule
\end{tabular}
\end{center}
\vspace*{-0.4cm}
\caption{Time (in seconds) for the best implementations
         on several linear codes of medium size 
         on one core of \texttt{cplex}.}
\label{table:singlecore1}
\end{table}

Table~\ref{table:singlecore3} compares the implementations 
for the new linear codes 
on one core of \texttt{cplex}.
These linear codes are larger than the previous linear ones.
Our two new implementations clearly outperform 
the usual current implementations.
Our scalar implementation is in average 1.71 times as fast as \magma{}, and
our vector implementation is in average 3.58 times as fast as \magma{}.
Our scalar implementation is in average 1.64 times as fast as \guava{}, and
our vector implementation is in average 3.44 times as fast as \guava{}.

\begin{table}[ht!]
\vspace*{0.4cm}
\begin{center}
\begin{tabular}{crrrr}
	\toprule
    \multicolumn{1}{c}{Code} & 
    \multicolumn{1}{c}{\magma{}} & 
    \multicolumn{1}{c}{\guava{}} & 
    \multicolumn{1}{c}{Scalar Saved} & 
    \multicolumn{1}{c}{Vector Saved} \\ \midrule
  $[235,51,64]$ & 802,364.2 & 681,339.1 & 484,788.1 & 234,890.8 \\ 
  $[236,51,64]$ & 786,181.6 & 686,686.4 & 484,834.9 & 231,345.5 \\ 
  $[233,51,62]$ & 643,663.4 & 567,629.1 & 335,678.6 & 159,470.8 \\ 
  $[233,52,61]$ & 687,073.4 & 934,105.0 & 482,535.2 & 225,141.2 \\ 
  $[232,51,61]$ & 503,984.2 & 456,413.2 & 261,250.3 & 125,695.6 \\ 
    \bottomrule
\end{tabular}
\end{center}
\vspace*{-0.4cm}
\caption{Time (in seconds) for the best implementations
         on the new linear codes 
         on one core of \texttt{cplex}.}
\label{table:singlecore3}
\end{table}

Figure~\ref{figure:singlecore1} shows 
the performances (in terms of combinations per second)
for all the linear codes on one core of \texttt{cplex}.
Therefore, the higher the bars, the better the performances are.
The total number of combinations processed by all the algorithms 
are usually similar, but not identical, 
since some algorithms (such as \guava{}) 
can process an additional $\Gamma$ matrix in a few cases.
To make the comparison fair, all the algorithms used the same
total number of combinations: those returned by \guava{}.
This figure shows that 
both the new scalar and vector implementations clearly outperform 
both \magma{} and \guava{} for all cases.

%%%% \begin{figure}[ht!]
%%%% \vspace*{0.4cm}
%%%% \begin{center}
%%%% \begin{tabular}{cc}
%%%% \includegraphics[width=0.48\textwidth]{plots/bar_plots/crc_bars_med_cplex_1c} &
%%%% \includegraphics[width=0.48\textwidth]{plots/bar_plots/crc_bars_lar_cplex_1c}
%%%% \end{tabular}
%%%% \end{center}
%%%% \vspace*{-0.4cm}
%%%% \caption{Timings for the best implementations for all the linear codes
%%%%          on one core of \texttt{cplex}
%%%%          considering \magma{} as reference 1.0.}
%%%% \label{figure:singlecore1}
%%%% \end{figure}

\begin{figure}[ht!]
\vspace*{0.4cm}
\begin{center}
\begin{tabular}{cc}
\includegraphics[width=0.48\textwidth]{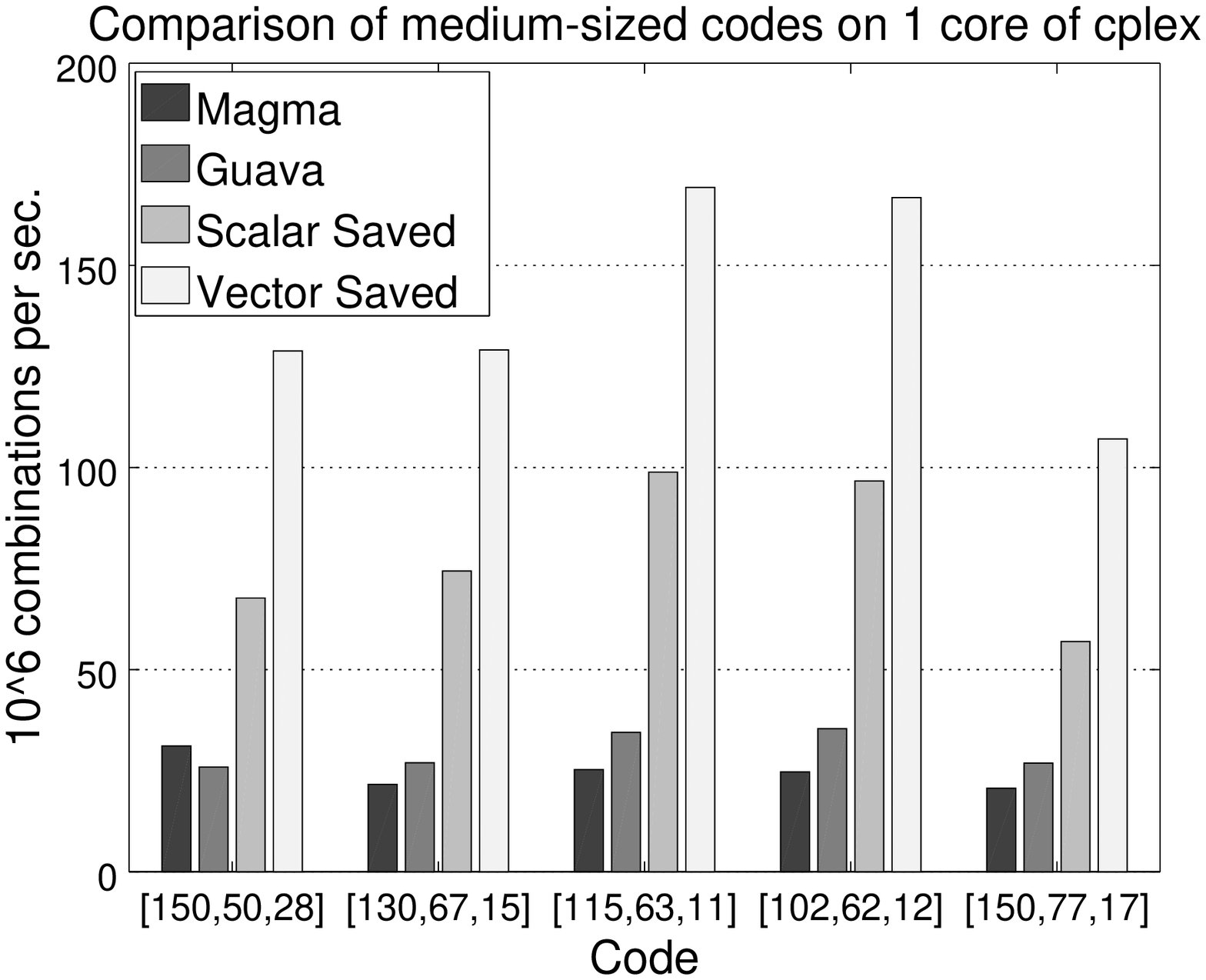} &
\includegraphics[width=0.48\textwidth]{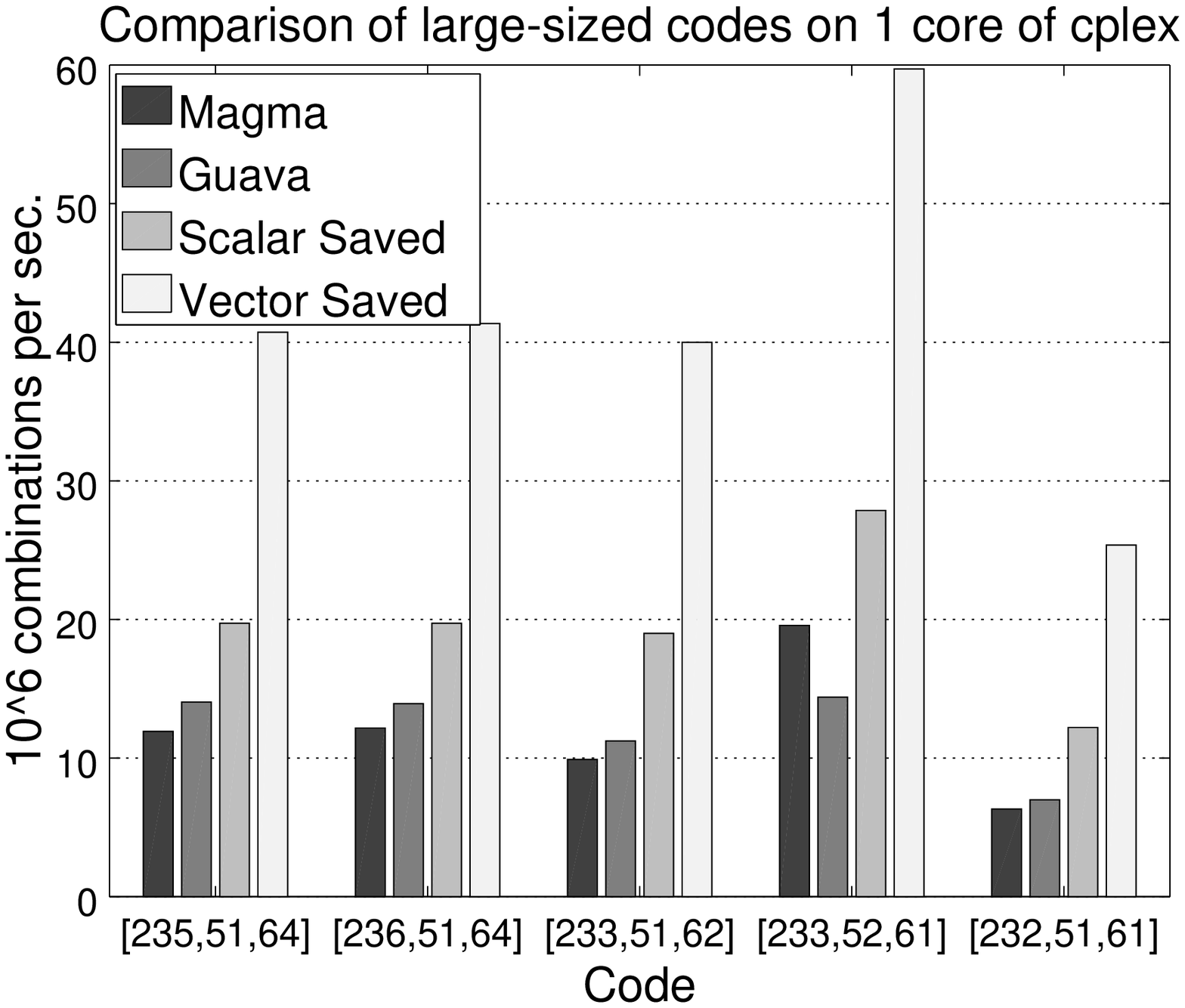}
\end{tabular}
\end{center}
\vspace*{-0.4cm}
\caption{Performance (in terms of $10^{6}$ combinations per sec.) 
         of the best implementations for all the linear codes
         on one core of \texttt{cplex}.}
\label{figure:singlecore1}
\end{figure}

% -----------------------------------------------------------------------------
\subsection{A comparison of the best implementations on multicores}
% -----------------------------------------------------------------------------

Table~\ref{table:multicore1} compares our best implementations
and the best implementations available 
for some linear codes of medium size on multiple cores of \texttt{cplex}.

As \guava{} cannot use several cores, it was not included in these experiments.
As \magma{} cannot use hardware vector instructions in the computer being used,
we could only evaluate it with scalar instructions.

The performance improvement is also remarkable in these cases: 
Our new implementations are several times faster than \magma{}.
Our scalar implementation is in average 3.26 times as fast as \magma{}, and
our vector implementation is in average 5.58 times as fast as \magma{}.
The improvement factors of our new implementations with respect to \magma{}
on multicores are similar to those on a single core,
thus showing that our parallelization is as good as that of \magma{}.

\begin{table}[ht!]
\vspace*{0.4cm}
\begin{center}
\begin{tabular}{crrr}
	\toprule
    \multicolumn{1}{c}{Code} & 
    \multicolumn{1}{c}{\magma{}} & 
    \multicolumn{1}{c}{Scalar Saved} & 
    \multicolumn{1}{c}{Vector Saved} \\ \midrule
  $[150,50,28]$ &      29.3 &      13.5 &       7.8 \\ 
  $[130,67,15]$ &     345.3 &     104.7 &      64.9 \\ 
  $[115,63,11]$ &     860.5 &     224.4 &     134.2 \\ 
  $[102,62,12]$ &   3,659.8 &     894.7 &     525.1 \\ 
  $[150,77,17]$ &   9,562.8 &   3,314.6 &   1,763.1 \\ 
    \bottomrule
\end{tabular}
\end{center}
\vspace*{-0.4cm}
\caption{Time (in seconds) for the best implementations
         on several linear codes of medium size
         on 6 cores of \texttt{cplex}.}
\label{table:multicore1}
\end{table}

Table~\ref{table:multicore3} compares our best implementations
and the best implementations available 
for the new linear codes 
on multiple cores of \texttt{cplex}.
The performance improvement is also remarkable in these cases: 
Our new implementations are faster than \magma{}.
In average,
our scalar implementation is 1.72 times as fast as \magma{}, and
our vector implementation is 3.63 times as fast as \magma{}.
The improvement factors of our new implementations with respect to \magma{}
on multicore are also similar to those on a single core,
thus showing that our parallelization is as good as that of \magma{}.

\begin{table}[ht!]
\vspace*{0.4cm}
\begin{center}
\begin{tabular}{crrr}
	\toprule
    \multicolumn{1}{c}{Code} & 
    \multicolumn{1}{c}{\magma{}} & 
    \multicolumn{1}{c}{Scalar Saved} & 
    \multicolumn{1}{c}{Vector Saved} \\ \midrule
  $[235,51,64]$ & 133,233.3 &  80,963.0 &  38,428.5 \\ 
  $[236,51,64]$ & 132,385.7 &  80,966.6 &  38,497.1 \\ 
  $[233,51,62]$ & 108,552.2 &  56,083.6 &  26,725.6 \\ 
  $[233,52,61]$ & 116,002.6 &  80,583.9 &  37,660.1 \\ 
  $[232,51,61]$ &  85,341.1 &  43,618.6 &  20,691.6 \\ 
	\bottomrule
\end{tabular}
\end{center}
\vspace*{-0.4cm}
\caption{Time (in seconds) for the best implementations
         on the new linear codes
         on 6 cores of \texttt{cplex}.}
\label{table:multicore3}
\end{table}

Figure~\ref{figure:multicore1} shows 
the performances (in terms of combinations per second)
for all the linear codes on 6 cores of \texttt{cplex}.
Therefore, the higher the bars, the better the performances are.
To make the comparison fair, all the algorithms used the same
total number of combinations: those returned by \guava{}.
This figure shows that 
both the new multicore scalar and vector implementations clearly outperform 
\magma{} for all cases.

%%%% \begin{figure}[ht!]
%%%% \vspace*{0.4cm}
%%%% \begin{center}
%%%% \begin{tabular}{cc}
%%%% \includegraphics[width=0.48\textwidth]{plots/bar_plots/crc_bars_med_cplex_6c} &
%%%% \includegraphics[width=0.48\textwidth]{plots/bar_plots/crc_bars_lar_cplex_6c}
%%%% \end{tabular}
%%%% \end{center}
%%%% \vspace*{-0.4cm}
%%%% \caption{Timings for the best implementations for all the linear codes
%%%%          on 6 cores of \texttt{cplex}
%%%%          considering \magma{} as reference 1.0.}
%%%% \label{figure:multicore1}
%%%% \end{figure}

\begin{figure}[ht!]
\vspace*{0.4cm}
\begin{center}
\begin{tabular}{cc}
\includegraphics[width=0.48\textwidth]{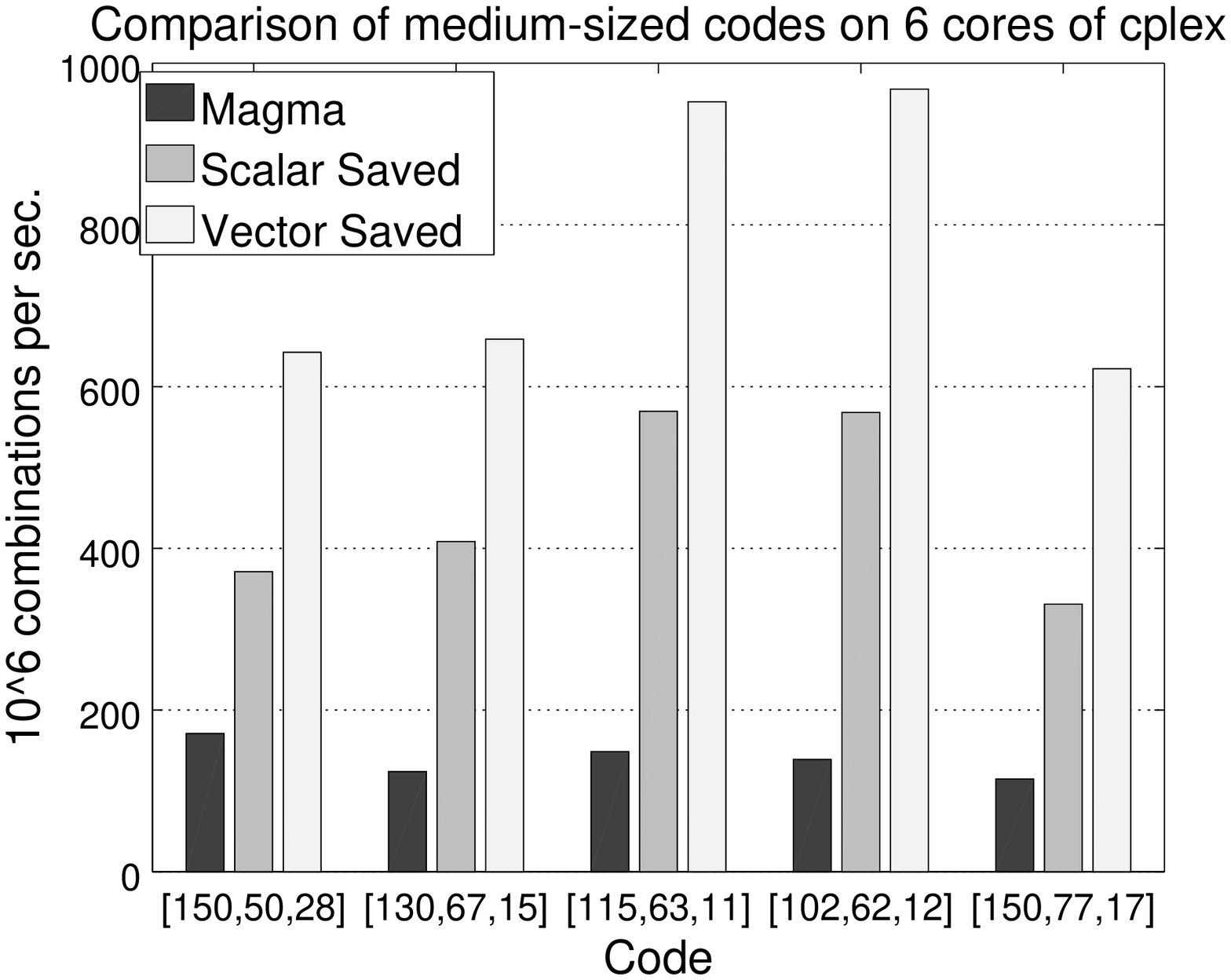} &
\includegraphics[width=0.48\textwidth]{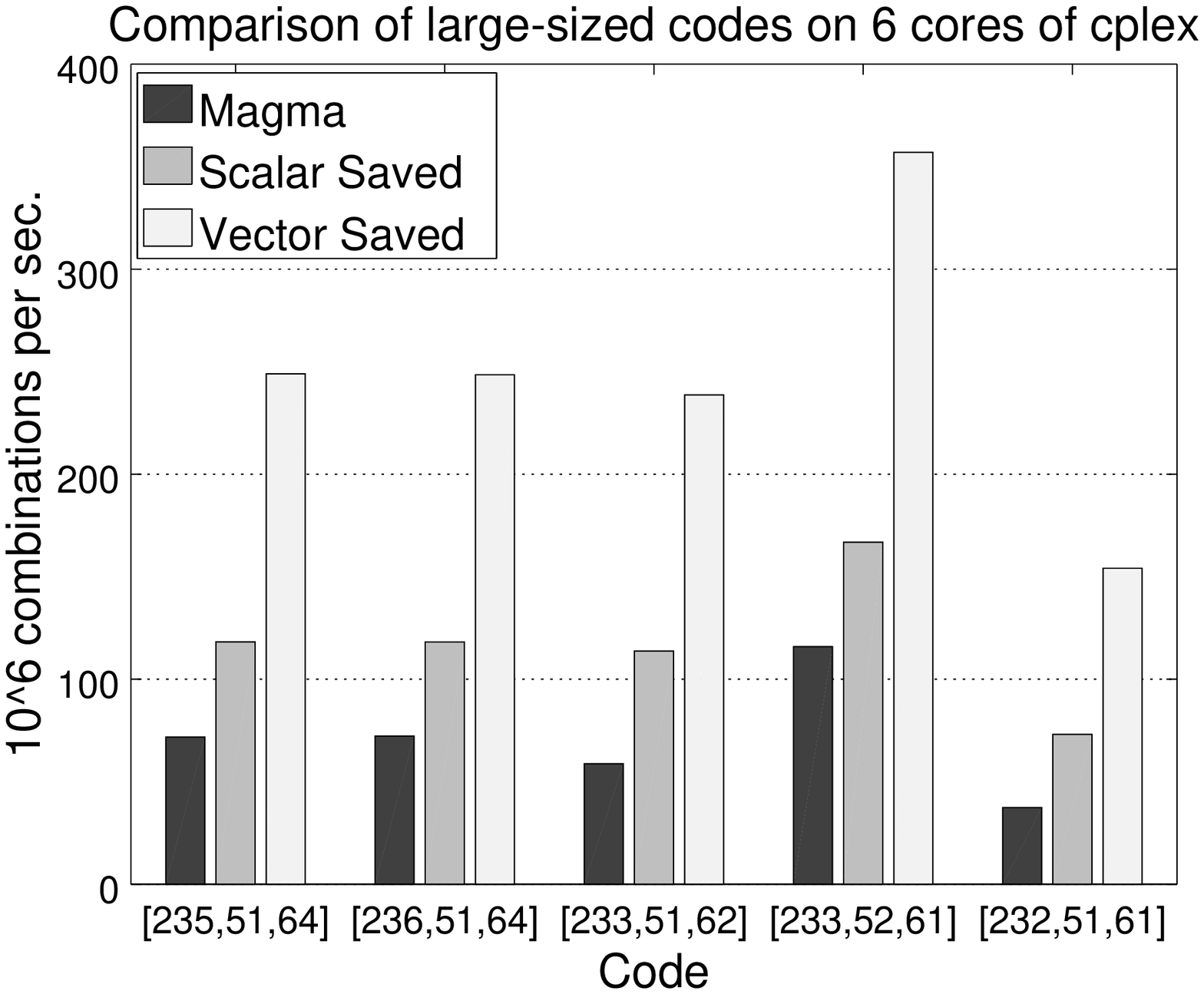}
\end{tabular}
\end{center}
\vspace*{-0.4cm}
\caption{Performance (in terms of $10^{6}$ combinations per sec.) 
         of the best implementations for all the linear codes
         on 6 cores of \texttt{cplex}.}
\label{figure:multicore1}
\end{figure}

% -----------------------------------------------------------------------------
\subsection{Parallelization and scalability}
% -----------------------------------------------------------------------------

Figure~\ref{figure:scal1} shows the obtained speedups by both 
our scalar and vector algorithms with saved additions 
for several configurations of cores 
on both \texttt{cplex} and \texttt{marbore}.
We employed a medium-sized code with parameters [150,77,17], and 
similar results were obtained on other codes.
In this cases, we used up to all the 8 cores in \texttt{cplex}.
The two plots show that the new implementations are remarkably scalable,
even with a high number of cores, 
since the obtained speedups are very close to the perfect ones.
For instance, when run on the 28 cores of \texttt{marbore},
the parallel implementation was more than 26 times as fast as 
the serial implementation.

\begin{figure}[ht!]
\vspace*{0.4cm}
\begin{center}
\begin{tabular}{cc}
\includegraphics[width=0.48\textwidth]{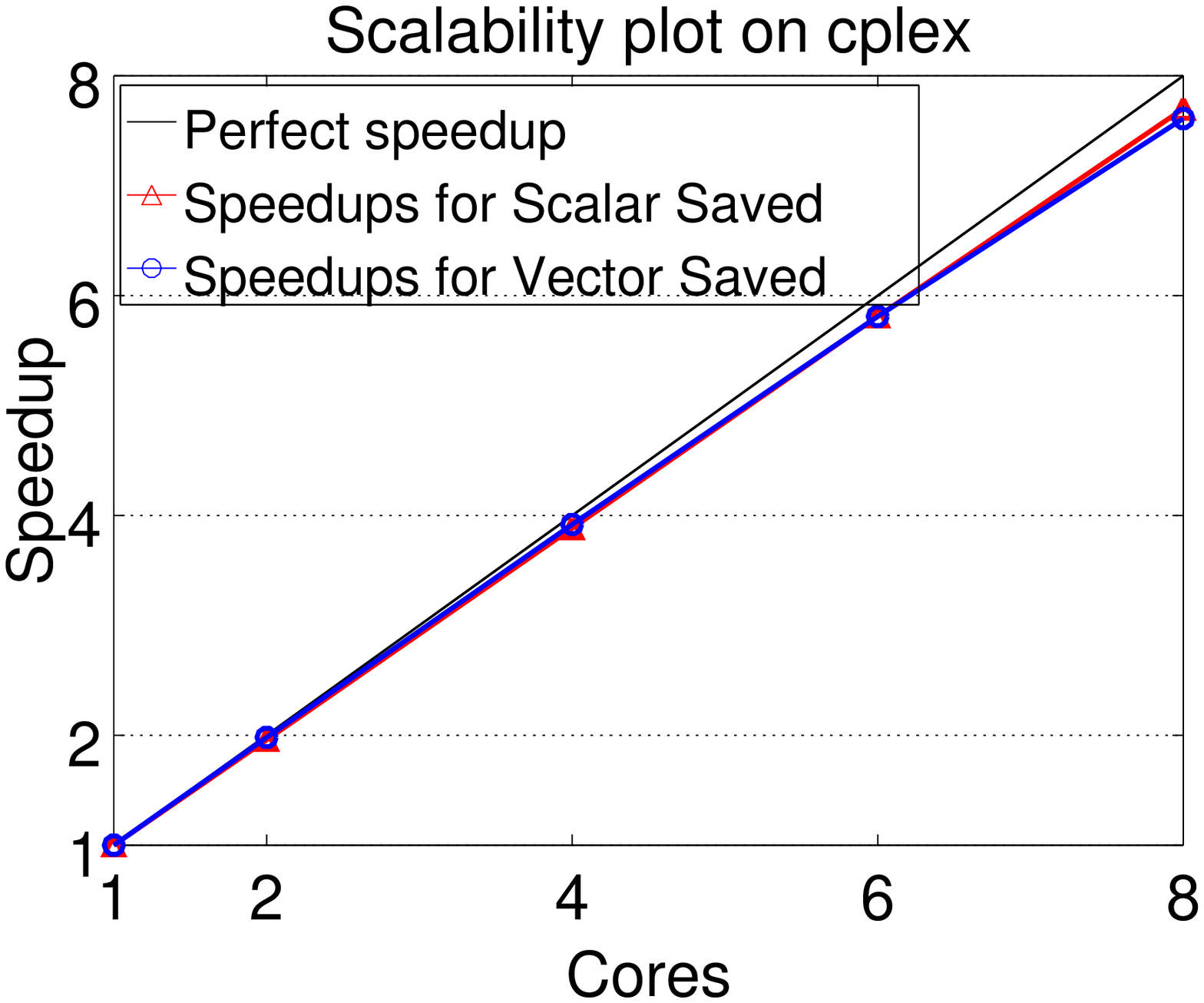} &
\includegraphics[width=0.48\textwidth]{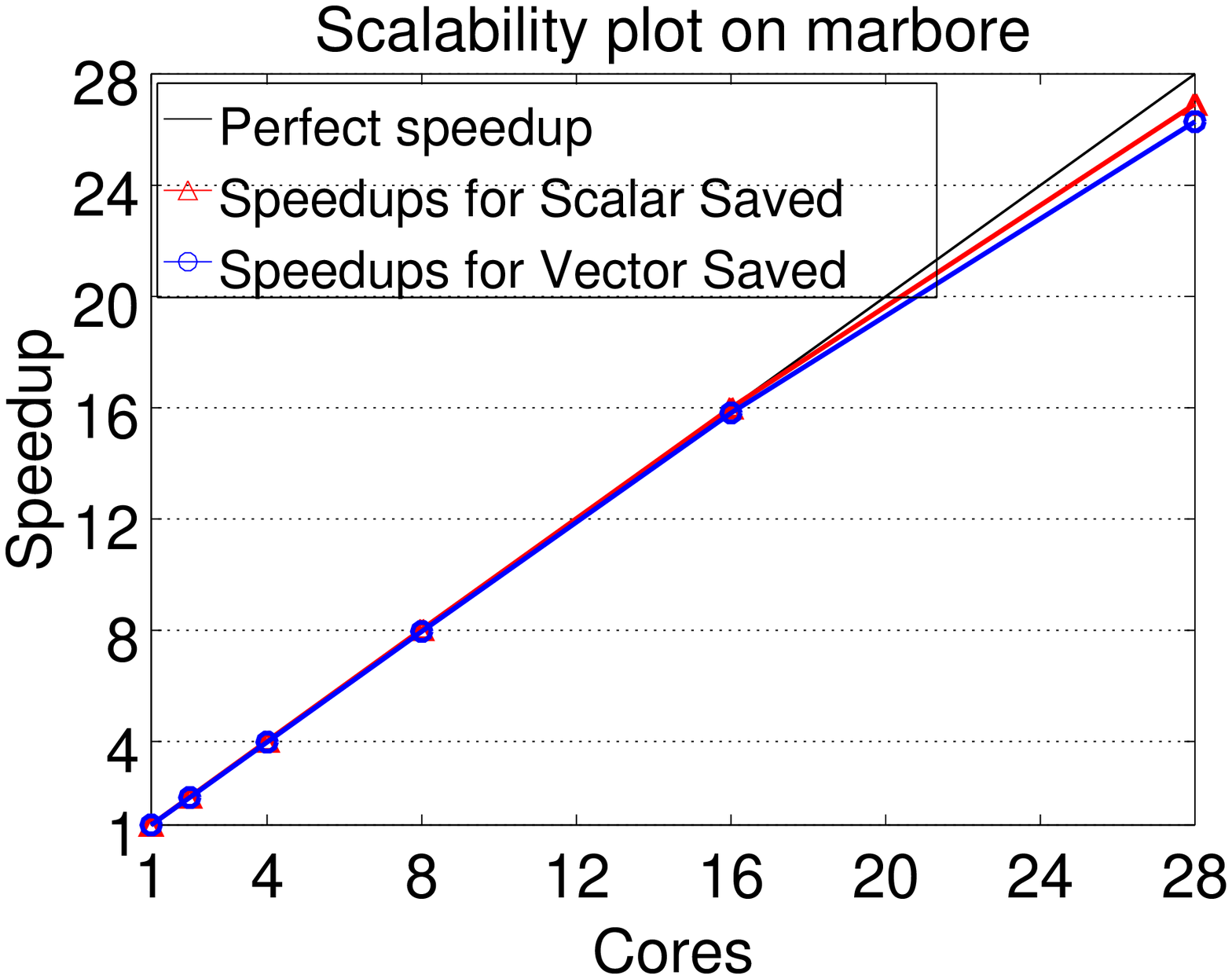}
\end{tabular}
\end{center}
\vspace*{-0.4cm}
\caption{Obtained speedups on both machines 
         (\texttt{cplex} left; \texttt{marbore} right).}
\label{figure:scal1}
\end{figure}

% =============================================================================
\section{New linear codes}
\label{sec:new_codes}
% =============================================================================

Armed with our new implementations, computations that used to take
several days using state-of-the-art software packages can be performed in only a few hours.
Taking advantage of this fact, we were able to explore well known techniques 
to generate new linear codes.

We consider matrix-matrix product codes with polynomial units (see~\cite{hr1})
$ C = [ C_1 C_2 ] \cdot A$, 
where $C_1$ and $C_2$ are cyclic nested codes with the same length $m$ 
and $d_2 > 2 d_1$, 
and matrix $A$ is defined as:
$$
A=
\left(\begin{matrix}
  1 & p \\
  0 & 1
\end{matrix}\right),
$$ where $p$ is a unit in the following ring $\mathbb{F}_2[x]/(x^m -1)$.

We compared the minimum distance of these binary linear codes obtained
with our implementations with the ones in \cite{cota},
the well-known archive of best linear codes.
Later on,
we obtained the following linear codes whose parameters are better than the
ones previously known in \cite{cota}:

\vspace{0.2cm}

%\begin{tabular}{|l|l|}
%\hline From \cite{cota} & New codes   \\ \hline
%$[234,51,62]$ & $\code_1=[234,51,63]$ \\ \hline
%$[234,52,61]$ & $\code_2=[234,52,62]$ \\ \hline
%\end{tabular}

\begin{center}
\begin{tabular}{ll}
\toprule
From \cite{cota} & New codes   \\ \midrule
$[234,51,62]$ & $\code_1=[234,51,63]$ \\ 
$[234,52,61]$ & $\code_2=[234,52,62]$ \\ 
\bottomrule
\end{tabular}
\end{center}

\vspace{0.2cm}

$\code_1=[C_1,C_2] \cdot A$, where $C_1=(f_1)$ and $C_2=(f_2)$ with:

\begin{itemize}

\item $f_1 = x^{67} + x^{59} + x^{54} + x^{51} + x^{49} + x^{42} + x^{39} +
             x^{36} + x^{35} + x^{34} + x^{33} + x^{31} + x^{30} + x^{29} + 
             x^{27} + x^{26} + x^{25} + x^{24} + x^{22} + x^{21} + x^{19} + 
             x^{17} + x^{16 }+ x^{15 }+ x^{14} + x^{13} + x^{11} + x^6 + 
             x^5 + x^3 + x^2 + 1
,$

\item $f_2=(x^{117}-1)/(x+1),$

\item $p = x^{117 }+ x^{116} + x^{115 }+ x^{111} + x^{110} + x^{109} +
           x^{103} + x^{102} + x^{98} + x^{95} + x^{94} + x^{92} + x^{88} + 
           x^{85} + x^{83} + x^{81} + x^{74} + x^{72} + x^{70} + x^{68} + 
           x^{66} + x^{65} + x^{64} + x^{62} + x^{61} + x^{58} + x^{56} + 
           x^{55 }+ x^{54} + x^{51 }+ x^{49} + x^{45} + x^{43} + x^{39} + 
           x^{38} + x^{37} + x^{36} + x^{35} + x^{34} + x^{28} + x^{27} + 
           x^{23} + x^{22} + x^{20} + x^{19} + x^{16} + x^{14} + x^9 + 
           x^7 + x^6 + x^5 + x^4 + x^3 +
x^2 + 1.$

\end{itemize}

$\code_2=[C_1,C_2] \cdot A$, where $C_1=(f_1)$ and $C_2=(f_2)$ with:

\begin{itemize}

\item $f_1 = x^{67} + x^{59} + x^{54} + x^{51} + x^{49} + x^{42} + x^{39} +
             x^{36} + x^{35} + x^{34} + x^{33} + x^{31} + x^{30} + x^{29} + 
             x^{27} + x^{26} + x^{25} + x^{24} + x^{22} + x^{21} + x^{19} + 
             x^{17} + x^{16} + x^{15} + x^{14} + x^{13} + x^{11} + 
             x^6 + x^5 + x^3 + x^2 + 1
,$

\item $f_2 = (x^{117}-1)/(x^2+x+1),$

\item $p = x^{217} + x^{214} + x^{213} + x^{211} + x^{210} + x^{209} +
           x^{207} + x^{205} + x^{203} + x^{202} + x^{200 }+ x^{198} + 
           x^{195} + x^{194} + x^{193} + x^{192} + x^{190} + x^{189} + 
           x^{186} + x^{185} + x^{183} + x^{182} + x^{180} + x^{176} + 
           x^{175} + x^{173} + x^{172} + x^{171} + x^{169} + x^{167} + 
           x^{165} + x^{164} + x^{161 }+ x^{160} + x^{159} + x^{155} + 
           x^{154} + x^{151} + x^{150 }+ x^{148} + x^{147} + x^{144} + 
           x^{143} + x^{142} + x^{141 }+ x^{140} + x^{137} + x^{135} + 
           x^{132} + x^{130} + x^{129} + x^{128} + x^{127} + x^{125} + 
           x^{124} + x^{122} + x^{121 }+ x^{119} + x^{118} + x^{116} + 
           x^{112} + x^{107} + x^{105} + x^{103} + x^{102} + 
           x^{99} + x^{97} + x^{90} + x^{89} + x^{88} + x^{87} + x^{82} + 
           x^{76} + x^{74} + x^{71} + x^{69} + x^{68} + x^{66} + x^{64} + 
           x^{60} + x^{53} + x^{51} + x^{50} + x^{47} + x^{45} + x^{43} + 
           x^{40} + x^{39} + x^{37} + x^{36} + x^{35} + x^{34} + x^{33} + 
           x^{31} + x^{30} + x^{29} + x^{28} + x^{26} + x^{24} + x^{21} + 
           x^{20} + x^{18} + x^{17} + x^{15} + x^{14} + x^{12} + 
           x^8 + x^5 + x^4 + x^3 + 1.$

\end{itemize}

Moreover, operating on $\code_1$ and $\code_2$ we got five more codes:

\vspace{.2 cm}

\begin{center}
\begin{tabular}{cll}
	\toprule
From \cite{cota} & New codes & Method   \\ \midrule
$[235,51,62]$ & $\code_3 =[235,51,64]$ &  Extend~Code($\code_1$) \\ %\hline
$[236,51,63]$ & $\code_4 =[236,51,64]$ &  Extend~Code($\code_3$)  \\ %\hline
$[233,51,61]$ & $\code_5 =[233,51,62]$ &  Puncture~Code($\code_1$,{234}) \\ %\hline
$[232,51,60]$ & $\code_5 =[233,51,61]$ &  Puncture~Code($\code_1$,{234,233}) \\ %\hline
$[233,52,60]$ & $\code_5 =[233,52,61]$ &  Puncture~Code($\code_2$,{234}) \\
\bottomrule
\end{tabular}
\end{center}

% =============================================================================
\section{Conclusions}
\label{sec:conclusions}
% =============================================================================

In this paper, we have presented 
several new algorithms and implementations that 
compute the minimum distance of a random linear code over $\mathbb{F}_2$.
We have compared them with the existing ones in \magma{} and \guava{} 
in terms of performance, 
obtaining faster implementations in both cases using both sequential 
and parallel implementations, 
each of them either in the scalar or in the vectorized case. 
Finally, we have used our implementation to find out new linear codes 
over $\mathbb{F}_2$ with better parameters than the currently existing ones.
The new ideas and algorithms introduced in this paper 
can also be extended and applied over other finite fields.

Future work in this area will investigate 
the development of specific new algorithms and implementations 
for new architectures 
such as distributed-memory architectures 
and GPGPUs (General-Purpose Graphic Processing Units).

% =============================================================================
\section*{Source code availability}
% =============================================================================

The source codes described in this paper can be obtained by sending an email
to \mbox{\texttt{gquintan@icc.uji.es}},
and will be of public access upon the acceptance of this paper. 
We provide the source codes as we think the scientific community can benefit
from our work by being able to compute minimum distances of 
random linear codes in a faster way.

%%%% The programming codes described in this paper 
%%%% can be obtained by sending an email to
%%%% \texttt{gquintan@icc.uji.es}.
%%%% The codes have been written in the C programming language 
%%%% to increase portability, and the installation is straightforward.
%%%% Because of the vector instructions,
%%%% the codes can be executed only on x86 architectures.
%%%% Unlike other implementations, our codes are more flexible and 
%%%% can take of older processors.
%%%% In contrast, the scalar mode (no vector instructions) of our codes 
%%%% can be executed on any architecture.  
%%%% We provide the programming codes because 
%%%% we think the scientific community can benefit from our work
%%%% by being able to compute minimum distances of 
%%%% random linear codes in a faster way.
%%%% 
%%%% The programming codes we have presented have been tested 
%%%% on more than 100 linear codes.
%%%% In addition, the programming codes have also been checked
%%%% for memory leaks with the \textit{Memcheck} tool of \textit{Valgrind}.

% =============================================================================
\section*{Acknowledgements}
% =============================================================================

This work is supported 
by the Spanish Ministry of Economy 
(grants MTM2012-36917-C03-03 and MTM2015-65764-C3-2-P) 
and by the University Jaume I 
(grant P1·1B2015-02 and TIN2012-32180).

The authors would like to thank Claude Shannon Institute 
for granting access to \texttt{Cplex}.

% =============================================================================
% Bibliography
% =============================================================================

\bibliographystyle{plain}
%\vspace{-0.1cm}

% =============================================================================
% End of the document.
% =============================================================================

\end{document}